\DeclareMathOperator*{\argmax}{arg\,max}
\newtheorem{assumption}{Assumption}
\newtheorem{proposition}{Proposition}
\begin{document}
\title{Optimal Drive-by Sensing in Urban Road Networks with Large-scale Ridesourcing Vehicles}
\author{\IEEEauthorblockN{Shuocheng Guo and Xinwu Qian}
\thanks{Shuocheng Guo and Xinwu Qian \textit{(corresponding author)} are with the Department of Civil, Construction and Environmental Engineering, The University of Alabama, Tuscaloosa, AL 35487, USA (e-mail: sguo18@ua.edu; xinwu.qian@ua.edu)}}
\maketitle

\begin{abstract}
The sensing and monitoring of the urban road network contribute to the efficient operation of the urban transportation system and the functionality of urban systems. 
However, traditional sensing methods, such as inductive loop sensors, roadside cameras, and crowdsourcing data from massive urban travelers (e.g., Google Maps), are often hindered by high costs, limited coverage, and low reliability. 
This study explores the potential of drive-by sensing, an innovative approach that employs large-scale ridesourcing vehicles (RVs) for urban road network monitoring. We first evaluate RV sensing performance by coverage and reliability through historical road segment visits. 
Next, we propose an optimal trip-based RV rerouting model to maximize the sensing coverage and reliability while preserving the same level of service for the RVs' mobility service. Furthermore, a scalable column generation-based heuristic is designed to guide the cruising trajectory of RVs, assuming trip independence.
The effectiveness of the proposed model is validated through experiments and sensitivity analyses using real-world RV trajectory data of over 20,000 vehicles in New York City. 
The optimized rerouting strategy has yielded significantly improved results, elevating explicit sensing coverage of the road network by 15.0\% to 17.3\% (varies by time of day) and achieving an impressive enhancement in sensing reliability by at least 24.6\% compared to historical records. 
Expanding the path-searching space further improved sensing coverage of up to 4.5\% and reliability of over 4.2\%. 
Moreover, considering incentives for RV drivers, the enhanced sensing performance comes at a remarkably low cost of \$0.10 per RV driver, highlighting its cost-effectiveness.
\end{abstract}

\begin{IEEEkeywords}
    Drive-by sensing, mobile crowdsensing, ridesourcing, sensing coverage, sensing reliability
\end{IEEEkeywords}

\section*{Nomenclature}
\addcontentsline{toc}{section}{Nomenclature}
\subsection{Abbreviations}
\begin{IEEEdescription}[\IEEEusemathlabelsep\IEEEsetlabelwidth{$V_1,V_2$}]
\item[ECR] Explicit Coverage Rate
\item[GPS] Global Positioning System
\item[ImpCR] Implicit Coverage Rate
\item[InfCR] Inferred Coverage Rate
\item[NYC] New York City
\item[OD] Origin-Destination
\item[OSRM] Open Source Routing Machine
\item[RFID] Radio Frequency Identification
\item[RV] Ridesourcing Vehicles
\item[RVRP] RV Rerouting Problem
\end{IEEEdescription}

\subsection{Sets and vectors}

\begin{IEEEdescription}[\IEEEusemathlabelsep\IEEEsetlabelwidth{$V_1,V_2$}]
\item[$\mathcal{G}$] A transportation network, $\mathcal{G}:=\left(\mathcal{N},\mathcal{A}\right)$
\item[$\mathcal{N}$] Set of nodes in $\mathcal{G}$
\item[$\mathcal{A}$] Set of links in $\mathcal{G}$
\item[$\mathcal{A}_{\text{ob}}$] Set of links that are observed
\item[$\mathcal{A}_{\text{inf}}$] Set of links that are not observed but inferable
\item[$\mathcal{A}_{\text{add}}$] Set of minimal additional links to be visited to achieve full coverage, i.e., $\mathcal{A}=(\mathcal{A}_{\text{ob}}+\mathcal{A}_{\text{add}})\cup\mathcal{A}_{\text{inf}}$
\item[$\mathcal{T}$] Set of time slots for the sampled horizon
\item[$\mathcal{M}^{t}$] Set of historical cruising trips at $t\in\mathcal{T}$
\item[$\mathcal{R}^{t},\mathcal{R}^{t,*}$] Set of feasible/optimal alternative trips at $t\in\mathcal{T}$
\end{IEEEdescription}

\subsection{Parameter}
\begin{IEEEdescription}[\IEEEusemathlabelsep\IEEEsetlabelwidth{$V_1,V_2$}]
\item[$t_{ij}$] Travel time on arc $(i,j)\in\mathcal{A}$
\item[$t_{r}$] Total travel time on trip $r\in\mathcal{R}^{t}$
\item[$T_{i}^{o},T_{i}^{d}$] Historical OD pick-up and drop-off time for trip $i\in\mathcal{M}^{t}$
\item[$L_{i}^{o},L_{i}^{d}$] Historical OD locations for trip $i\in\mathcal{M}^{t}$
\item[$K^{t}$] RV fleet size at time $t$
\item[$\alpha_{k}^{r}$] Binary indicator that maps feasible trip $r\in\mathcal{R}^{t}$ to historical trip $k\in\mathcal{M}^{t}$
\item[$\beta_{ij}^{r},\beta_{ij}^{k}$] Binary indicator that maps traversed arc $(i,j)\in\mathcal{A}$ to feasible trip $r\in\mathcal{R}^{t}$ or historical trip $k\in\mathcal{M}^{t}$
\item[$\theta_{D},\theta_{T}$] Unit costs for driving distance and time 
\item[$\eta$] A scaling parameter for incentive determination
\item[$b_{k}^{t}$] Incentive for altering historical trip $k\in\mathcal{M}^{t}$
\end{IEEEdescription}


\subsection{Variables}
\begin{IEEEdescription}[\IEEEusemathlabelsep\IEEEsetlabelwidth{$V_1,V_2$}]
\item[$P_{ij}^{t}$] Sensing frequency of arc $(i,j) \in \mathcal{A}$ at $t\in\mathcal{T}$
\item[$y_{r}$] A binary decision variable, where $y_{r}=1$ indicates feasible route $r$ is included in the optimal solution
\item[$\mathbf{v}^{t},\overline{\mathbf{v}}^{t}$] Visit counts under optimal rerouting and expected visit counts at $t\in\mathcal{T}$
\end{IEEEdescription}

\subsection{Evaluation Metrics}
\begin{IEEEdescription}[\IEEEusemathlabelsep\IEEEsetlabelwidth{$r^{\text{exp}},r^{\text{inf}},r^{\text{imp}}$}]
\item[$S^{t}$] Sensing power at $t\in\mathcal{T}$
\item[$H^{t}$] Entropy of sensing frequency on $\mathcal{A}$ at $t\in\mathcal{T}$
\item[$I^{t}$] Independence score of optimal routes $\mathcal{R}^{t,*}$
\item[$B^{t}$] Incentive per RV driver to reroute at $t\in\mathcal{T}$
\item[$r^{\text{exp}},r^{\text{inf}},r^{\text{imp}}$] Explicit/Inferred/Implicit coverage rates
\end{IEEEdescription}

\section{Introduction}\label{sec:intro}
The \textit{drive-by sensing}, facilitated by the advances in mobile sensors, cloud computing, and the growth of shared mobility services, offers a low-cost and reliable alternative to conventional fixed sensor approaches for monitoring dynamic states in urban road networks, i.e., traffic flow, road congestion, and air quality \cite{yoon2007surface,work2008ensemble}. Conventional sensing methods using fixed sensors, e.g., inductive loop detectors and roadside cameras, are limited by reliability issues and financial concerns related to installation, operation, and maintenance. In contrast, as depicted in Fig.~\ref{fig:overview_drive_by_sensing}, the drive-by sensing approach employs a fleet of mobile sensor-equipped Ridesourcing Vehicles (RVs) operating as 24/7 mobility service providers, collecting crowd-sourced data while traversing the urban road network \cite{mora2019towards,o2019quantifying}. Third-party city services utilize this collected data to enhance traffic and urban environment monitoring. 
The versatility of mobile sensors enables RV-based drive-by sensing to overcome the limitations of traditional sensing methods, which provides a more cost-effective and flexible deployment at scale. As a result, it finds application in various areas, including transportation planning (real-time travel speed prediction \cite{cui2019traffic} and short-term travel demand forecasting \cite{qian2020short}), environmental and infrastructure monitoring \cite{desouza2020air,khan2016integration}, information broadcasting in the connected vehicle environment \cite{abdelhamid2015vehicle}, and training data collection for autonomous driving~\cite{anjomshoaa2018city}. For a more comprehensive review of drive-by sensing applications, see~\citet{anjomshoaa2018city} and \citet{ji2023survey}. Table~\ref{tab:RV_sensing_applications} summarizes the various applications of RV-based drive-by sensing.

\begin{figure}[h]
    \centering
    \includegraphics[width=0.5\textwidth]{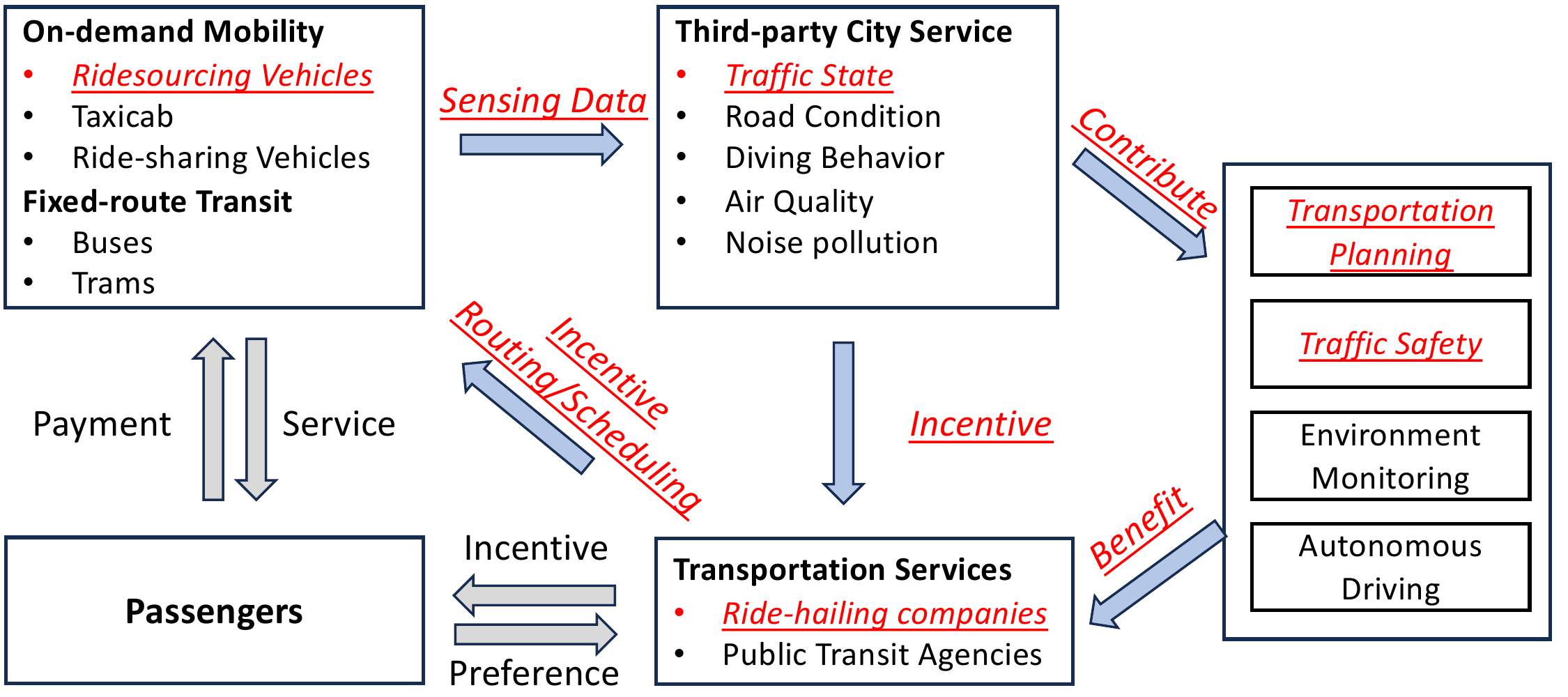}
    \caption{An overview of drive-by sensing (our focus is marked in red)}
    \label{fig:overview_drive_by_sensing}
\end{figure}

\begin{table}[h]
    \centering
    \caption{Possible applications of RV-based driver-by sensing}\label{tab:RV_sensing_applications}
    \begin{tabular}{p{1.7cm}|p{1.3cm}|p{4.4cm}}
        \hline
        Category & Sensor & Application(s)  \\
        \hline
        Ambient Fluid   & Air quality monitor      & Monitoring air quality \\\hline
        Urban Envelope	&  Camera & Real-time imaging of urban areas and collecting street information, e.g., Google Street View~\cite{google2023street}\\\hline
        Electromagnetic &   GPS and RFID    & Monitoring traffic flow and congestion~\cite{zhan2020link}\\\hline
        Electromagnetic &   GPS and gyroscope    & Fetching sample trajectories for training autonomous vehicles\\\hline
        Electromagnetic &	WiFi and Bluetooth   & Constructing vehicle-to-vehicle network \\\hline
        \multicolumn{3}{l}{\footnotesize Note: Adapted from \citet{anjomshoaa2018city}.}
    \end{tabular}
\end{table}

This study addresses three main challenges in executing efficient RV-based drive-by sensing. First, RV trips are driven by passenger demand, leading to disproportional sensing coverage and spatial disparities in the urban road network, with high-demand areas being repeatedly visited~\cite{qian2015characterizing}. Second, RVs, designed primarily for passenger transport, may not deviate from their routes for better sensing coverage when carrying passengers. Only unoccupied RVs can potentially be rerouted for improved sensing performance. However, this is constrained by the RV drivers' willingness to detour and user inconvenience (e.g., timely arrival at passengers). Finally, existing studies indicate that complete system information can be inferred based on partial observations without visiting every road segment~\cite{ng2012synergistic,he2013graphical}. However, the inferred information will depend on explicit visits and road network topology, suggesting the possibility of enhancing sensing performance by selectively visiting a subset of road segments.

To address these challenges, this study proposes an optimal RV-based drive-by sensing strategy by rerouting the \textit{unoccupied} RVs to monitor the large-scale urban road network. The investigation begins by analyzing historical sensing performance, including the proportion of road segments that are explicitly visited, the inferable road segments that are not visited, and the \textit{sensing power}~\cite{o2019quantifying}, to capture the sensing reliability through repeated visits. 
To enhance sensing performance, an optimal RV rerouting model is developed, aiming at simultaneously maximizing explicit sensing coverage and reinforcing sensing reliability. One path-based algorithm is designed to guide unoccupied RVs to road segments with lower sensing frequency, providing high-quality solutions in real time. The effectiveness of the model is demonstrated using real-world RV trajectory data from New York City, covering over 100,000 RVs and more than 1 million daily passenger trips~\cite{nyc_yearbook2020}. It is worth noting that this study is built upon our previous work~\cite{guo2022sensing}, which explored the universality of sensing frequency and RV driving range and serves as a baseline for the optimal rerouting strategy in this study. 
The main contributions of this study can be summarized as follows:

\begin{itemize}
    \item We utilize real-world RV trajectory data to analyze the sensing performance in a large-scale urban road network, focusing on coverage and reliability.
    \item We develop and validate a path-based rerouting model that guides unoccupied RVs to enhance sensing performance in terms of both coverage and reliability.
    \item We propose a scalable column generation-based heuristics based on a decomposition method, facilitating real-time rerouting decisions for scenarios with idling RVs to be assigned.
    \item We demonstrate the effectiveness of the optimal sensing strategy and investigate the trade-offs between sensing performance and total incentive costs for rerouting.
\end{itemize}

The remainder of this study is organized as follows. In the next section, we provide a brief review of related literature. Section~\ref{sec:model} presents the RV rerouting models, solved by a path-based approach in Section~\ref{sec:solution}. The numerical experiment is conducted in Section~\ref{sec:experiment}, where we introduce the dataset, showcase the historical sensing performance, and demonstrate the effectiveness of the optimal rerouting strategy. Finally, we conclude our study in Section~\ref{sec:conclusions}.

 \section{Literature Review}\label{sec:lit}
The sensor placement problem regarding the fixed-location sensors (i.e., inductive loop detectors and roadside cameras) has been extensively investigated in recent decades (for a comprehensive survey, see~\citet{gentili2018review}). Among the first, \citet{lam1990accuracy} proposed two key components to select the optimal sensor locations: traffic flow volume and origin-destination (OD) coverage. \citet{yang1991analysis} evaluated the performance of the OD flow estimation based on the traffic volume counts. Later, researchers proposed different optimization-based frameworks to find the optimal sensor placement considering the travel time estimation and traffic OD flow count~\cite{yang1998optimal,ban2009optimal,zhou2010information,li2011reliable}. Besides, the reliability of sensor location on the travel time estimation~\cite{fujito2006effect,ban2007performance} and OD coverage~\cite{fei2007sensor,fei2013vehicular} has also been investigated. With a limited budget for fixed-location sensors, complete coverage on every road segment can hardly be achieved. In addition to the optimal sensor location design, it motivates the researchers on the link observability problem, where the information (e.g., traffic volumes) can be inferred based on the historical observation and road network topology with no sensors being physically installed~\cite{hu2009identification,ng2012synergistic,viti2014assessing,xu2016robust}. The advances in GPS technology make it feasible and practical to conduct dynamic probing for traffic conditions. In this case, the data are collected from the GPS-enabled
mobile devices or smartphones~\cite{work2008ensemble,herrera2010evaluation}, mobile traffic sensors~\cite{sun2013vehicle}, and Bluetooth sensors~\cite{park2015optimal}.

Recent advancements in mobile portable sensors and efficient online cloud-computing platforms have made \textit{drive-by sensing} feasible and practical~\cite{mora2019towards}, which has been examined in various cities in the world~\cite{o2019quantifying,anjomshoaa2020quantifying,ma2021high,dai2023exploring}. 
Drive-by sensing, particularly using RV fleets, has shown great potential in overcoming data reliability and coverage issues associated with fixed-location sensors, which typically require substantial investment. One distinctive feature of RV-based drive-by sensing is its flexibility, as a slight deviation from original routes can significantly enhance sensing coverage without the need for additional fixed-location sensor construction. In addition, RV-based drive-by sensing fleets offer a robust platform for gathering high-resolution spatial and temporal data \cite{anjomshoaa2020quantifying}, effectively capturing dynamic environmental attributes like air pollutants, air pressure, and spatial temperature variations \cite{anjomshoaa2018city}. However, RV operators might exhibit reluctance in undertaking sensing tasks entailing elevated driving distance to the target area. To address this challenge, various incentive mechanisms have been proposed, which aim to incentivize participants to engage in tasks within less popular areas, either for the fulfillment of crowdsensing assignments \cite{tian2017movement,liu2020incentive} or to stimulate participation rates \cite{zhang2019incentive,wang2020promoting}.  Interested readers are referred to \citet{zhang2015incentives} for a comprehensive survey.

Despite these advantages, few studies have investigated the feasibility and efficiency of large-scale RV-based drive-by sensing in urban areas. Two existing studies focused on measuring the potential sensing performance based on historical taxi trips. \citet{agarwal2020modulo} examined the explicit sensing coverage of drive-by sensing in San Francisco and Rome of 1,138 and 304 vehicles, respectively. \citet{o2019quantifying} proposed sensing power to statistically quantify sensing reliability under different fleet sizes using taxi trip data from nine major cities. Both studies were limited to explicit observations based on historical trips without redeployment to improve sensing performance. 
In contrast, our study proposes an optimal RV rerouting strategy to explore the potential improvement in the sensing performance. Furthermore, we investigate the supplementary incentive costs required to improve the sensing performance.

\section{Model and formulation}\label{sec:model}
In this section, we propose an optimal RV rerouting strategy with the objective of enhancing the sensing coverage and reliability by altering the cruising paths for each unoccupied RV in the historical data. We start by defining the problem and discussing the properties of the objective function. Next, we present the trip-based formulation of the RV rerouting strategy. Finally, we emphasize the challenges and distinctive characteristics of the RV rerouting model.

\subsection{Problem Definition}
We focus on the RV Rerouting Problem (RVRP), which is extended from the vehicle rerouting/rescheduling problem with time windows~\cite{li2007vehicle,li2009real}. The RVRP is defined on a transportation network $\mathcal{G}=(\mathcal{N},\mathcal{A})$, where the node set $\mathcal{N}$ includes road intersections and link set $\mathcal{A}$ represents the road segments. 
Let $\mathcal{M}^{t}$ denote the set of historical cruising trips at time interval $t\in\mathcal{T}$. For each historical trip $k\in\mathcal{M}^{t}$, we define a binary indicator $\beta_{ij}^{k}$ that maps the trip to the link level, where $\beta_{ij}^{k}=1$ indicates that the arc $(i,j)\in\mathcal{A}$ is traversed by historical trip $k$, and $0$ otherwise. Additionally, for each trip $k\in\mathcal{M}^{t}$, the OD locations are represented by $L_{k}^{o}$ and $L_{k}^{d}$, and the corresponding pick-up and drop-off time schedules are $T_{k}^{o}$ and $T_{k}^{d}$, respectively. 
Our objective is to find a subset of alternative cruising trips, $\mathcal{R}^{t}$, between two known locations $L_{k}^{o}$ and $L_{k}^{d}$, for each historical trip $i\in\mathcal{M}^{t}$, to achieve improved sensing coverage and reliability, which are quantified as follows.

\paragraph{Sensing coverage} The sensing coverage rate is measured as the proportion of explicitly visited road segments, referred to as the explicit coverage rate (ECR), denoted by $r^{\text{exp}}$. The ECR is expressed as follows:
\begin{equation}
r^{\text{exp}} = \frac{\sum_{(i,j)\in\mathcal{A}} \min\{1,\sum_{k\in\mathcal{M}^{t}}\beta_{ij}^{k}\}}{|\mathcal{A}|}\label{eq:ECR}
\end{equation}
where the denominator $|\mathcal{A}|$ represents the total number of road segments. The numerator counts the number of traversed road segments. If any of the historical trips $k\in\mathcal{M}^{t}$ traverses arc $(i,j)\in\mathcal{A}$, we consider the arc $(i,j)$ as a visited road segment. Conversely, if none of the trips $k\in\mathcal{M}^{t}$ cover the arc $(i,j)$, we consider the arc $(i,j)$ as an unvisited road segment.

\paragraph{Sensing reliability}
We quantify the sensing reliability by introducing the concept of \textit{sensing power}~\cite{o2019quantifying}. The sensing power at time $t$, denoted by $S^{t}$, helps to examine if the RV fleet can reproduce a similar sensing pattern during different time periods $t\in\mathcal{T}$ given the randomness of the driver's behavior and the distribution of travel demand. 
Formally, the sensing power denotes the expectation on the probability that one road segment $(i,j) \in \mathcal{A}$ can be traversed at least once by the RV fleet during the time period $t\in\mathcal{T}$. 
We express the sensing power $S^{t}$ as follows:

\begin{equation}
    S^{t} \approx 1 - \frac{1}{|\mathcal{A}|} \sum_{(i,j) \in \mathcal{A}}\left(1-P_{ij}^{t}\right)^{N^{t}}
    \label{eq:sensing_power}
\end{equation}
where $P_{ij}^{t}$ is the sensing frequency of road segment $(i,j)\in\mathcal{A}$ at time $t$ and $\sum_{(i,j)\in\mathcal{A}}P_{ij}^{t}=1$ for all $t\in\mathcal{T}$. The power term $N^{t}$ denotes the total number of traversed road segments by all RV historical trips $\mathcal{M}^{t}$, which can be calculated as $N^{t}=\sum_{k\in\mathcal{M}^{t}}\sum_{(i,j)\in\mathcal{A}}\beta_{ij}^{k}$. As a result,
the term $\left(1-P_{ij}^{t}\right)^{N^{t}}$ represents the probability of road segment $(i,j)\in\mathcal{A}$ not being traversed by the RV trips $\mathcal{M}^{t}$ during period $t$.

\subsection{Trip-based Formulation}
In this section, we introduce the trip-based formulation of the RVRP. Let $\mathcal{R}^{t}$ be the set of alternative cruising trips at time period $t\in\mathcal{T}$. Each trip $r$ in the set $\mathcal{R}^{t}$ can be represented as a sequence of nodes $r:=(r_0,r_1,\ldots,r_{n-1},r_{n})$, where the corresponding arrival time schedule at each traversed node is $T_{r}^{0},T_{r}^{1},\ldots,T_{r}^{n-1},T_{r}^{n}$. Let $t_{r}$ denote the total travel time on alternative trip $r\in\mathcal{R}^{t}$, an altered trip $r$, corresponding to historical trip $k\in\mathcal{M}^{t}$, is considered feasible if the following conditions are satisfied.
\begin{itemize}
    \item $r_{0}=L_{k}^{o}$ and $r_{n}=L_{k}^{d}$
    \item $T_{r}^{0}\geq T_{k}^{o}$ and $t_{r}\leq T_{k}^{d}-T_{k}^{o}$
\end{itemize}

To achieve the objectives of maximizing both sensing coverage and reliability, we formulate our objective function as the information entropy of the sensing frequency $P_{ij}^{t}$ across all road segments $(i,j)\in\mathcal{A}$, denoted as $H^{t}$. A low entropy value indicates a skewed distribution of vehicle visiting frequency on each road segment, while a high entropy value signifies a more evenly distributed visiting count. Consequently, the RV rerouting strategy aims to enhance sensing coverage and reliability by selecting alternative paths $r\in\mathcal{R}^{t}$ for each historical trip $k\in\mathcal{M}^{t}$ that covers more road segments with low sensing frequency (including unvisited segments) and fewer road segments with high sensing frequency (e.g., repetitive visits). Parameter $\alpha_{k}^{r}$ is a binary indicator with $\alpha_{k}^{r}=1$ representing that historical trip $k\in\mathcal{M}^{t}$ is altered by trip $r\in\mathcal{R}^{t}$, and $0$ otherwise. The binary variable $y_{r}$ equals $1$ if trip $r$ is selected to be part of the optimal solution and $0$ otherwise. 
The trip-based formulation is expressed as follows:

\begin{subequations}
    \begin{align}
        \max_{\mathbf{y}}{H^{t}}= & \sum_{(i,j)\in\mathcal{A}} - P_{ij}^{t} \log(P_{ij}^{t})\label{eq:obj}\\
        \text{s.t.~} & P_{ij}^{t} = \frac{\sum\limits_{r\in\mathcal{R}^{t}}\beta_{ij}^{r}y_{r}}{\sum\limits_{(i,j)\in\mathcal{A}}\sum\limits_{r\in\mathcal{R}^{t}}\beta_{ij}^{r}y_{r}}\label{eq:cons1}\\
        & \sum\limits_{r\in\mathcal{R}^{t}}\alpha_{k}^{r}y_{r}=1 && \forall k\in\mathcal{M}^{t}\label{eq:cons2}\\
        & y_{r}\in \{0,1\} \label{eq:cons3}
\end{align}
\end{subequations}
where the objective function~\eqref{eq:obj} aims to maximize the information entropy of the sensing frequency. The sensing frequency on arc $(i,j)$, $P_{ij}^{t}$, is defined in Constraint~\eqref{eq:cons1} as the visit counts on the specific arc $(i,j)$ over all arcs $\mathcal{A}$. The numerator counts the sensed road segments over all altered trips $r\in\mathcal{R}^{t}$. The denominator represents the visit counts among all road segments for all trips. 
Constraints~\eqref{eq:cons2} ensure a one-to-one mapping between the historical trip and altered trips, such that exactly one altered trip is selected for each historical trip $k\in\mathcal{M}^{t}$.

We claim in Proposition~\ref{prop:equivalent_condition} that the objective function in Eq.\eqref{eq:obj} is equivalent to maximizing the sensing power in Eq.~\eqref{eq:sensing_power} under the same set of constraints~\eqref{eq:cons1}-\eqref{eq:cons3}. To establish this equivalence, we first demonstrate that both objective functions are concave within the same feasible region, given identical sets of constraints, as we are considering the same subset of RV historical trips $\mathcal{M}^{t}$ in the same transportation network $\mathcal{G}$. Hence, it is sufficient to show that the solution to the objective function in Eq.~\eqref{eq:obj} also satisfies the maximization of the sensing power.

\begin{proposition}
Suppose $\mathcal{R}^{t,\star}$ is the optimal solution to the entropy maximization of $H^{t}$ in Eq.~\eqref{eq:obj}, then $\mathcal{R}^{t,\star}$ can also maximize the sensing power in Eq.~\eqref{eq:sensing_power}. \label{prop:equivalent_condition} 
\end{proposition}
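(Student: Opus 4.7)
The plan is to exploit the fact that both objective functions are symmetric, strictly concave functions of the sensing frequency vector $\mathbf{P}^{t}=(P_{ij}^{t})_{(i,j)\in\mathcal{A}}$, and that both optimization problems share identical feasibility sets through constraints~\eqref{eq:cons1}--\eqref{eq:cons3}. Since neither objective depends on $\mathbf{y}$ except through $\mathbf{P}^{t}$, showing the two objectives are maximized at the same $\mathbf{P}^{t}$ immediately yields the claim about $\mathcal{R}^{t,\star}$.

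First, I would verify the concavity claim for sensing power. Since $N^{t}\geq 1$, the scalar function $p\mapsto (1-p)^{N^{t}}$ is strictly convex on $[0,1]$, so $S^{t}=1-\tfrac{1}{|\mathcal{A}|}\sum_{(i,j)\in\mathcal{A}}(1-P_{ij}^{t})^{N^{t}}$ is strictly concave in $\mathbf{P}^{t}$; strict concavity of $H^{t}=-\sum P_{ij}^{t}\log P_{ij}^{t}$ is classical. Next, I would solve each unconstrained maximization over the probability simplex $\Delta=\{P\geq 0:\sum P=1\}$ via KKT. For entropy, stationarity gives $-\log P_{ij}^{t}-1=\lambda$, forcing $P_{ij}^{t}$ to be constant and equal to $1/|\mathcal{A}|$. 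For sensing power, the first-order condition reads $\tfrac{N^{t}}{|\mathcal{A}|}(1-P_{ij}^{t})^{N^{t}-1}=\mu$ for all arcs, again forcing $P_{ij}^{t}=1/|\mathcal{A}|$. Hence both functions are uniquely maximized over $\Delta$ at the uniform distribution.

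Third, I would lift this to the constrained trip-based problem. Since $H^{t}$ and $S^{t}$ are both symmetric and concave, they are Schur-concave on $\Delta$, so each increases as $\mathbf{P}^{t}$ is replaced by a distribution that is majorized by it, with both attaining their common maximum exactly at the uniform distribution. Because both formulations operate over the \emph{same} feasible set of $\mathbf{y}$ and induce the \emph{same} mapping $\mathbf{y}\mapsto \mathbf{P}^{t}$, a selection $\mathcal{R}^{t,\star}$ whose induced distribution is closest to uniform (in the majorization/Schur sense) is simultaneously optimal for both objectives. I would conclude by arguing that the common monotonic dependence of $H^{t}$ and $S^{t}$ on the same Schur-ordering implies they share the same argmax over the induced polytope of feasible sensing frequencies.

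The main obstacle will be the last step: if two feasible distributions happen to be incomparable in the majorization order, Schur-concavity alone does not force $H^{t}$ and $S^{t}$ to rank them identically. To close this gap I would either (i) invoke continuous relaxation of the binary variables $y_{r}\in[0,1]$, so that strict concavity on a convex polytope yields a unique maximizer common to both objectives, or (ii) verify that for the specific structure of $\mathcal{R}^{t}$ generated by the trip-based formulation, the maximizing $\mathbf{P}^{t}$ lies at the Schur-minimal vertex of the achievable polytope, which is shared by both problems. Either route reduces the equivalence to the already-established fact that the uniform distribution is the common unconstrained optimum.
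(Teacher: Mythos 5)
Your main line of argument --- both $H^{t}$ and $S^{t}$ are symmetric concave functions of the frequency vector $\mathbf{P}^{t}$, each uniquely maximized over the probability simplex at the uniform distribution $P_{ij}^{t}=1/|\mathcal{A}|$ --- is essentially the argument the paper itself gives (the paper reaches the same conclusion via two applications of the arithmetic--geometric mean inequality rather than KKT conditions, but the content is the same). The difference is that you explicitly flag the step this leaves open, and that step is a genuine gap: knowing that two Schur-concave functions share the \emph{unconstrained} maximizer does not imply they share a maximizer over a feasible set that excludes that point, and the set of frequency vectors achievable through constraints \eqref{eq:cons1}--\eqref{eq:cons3} will in general not contain the uniform distribution. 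The two objectives really can disagree on majorization-incomparable points: with $|\mathcal{A}|=3$ and $N^{t}=2$, the vectors $(0.5,0.5,0)$ and $(0.7,0.2,0.1)$ are incomparable, entropy prefers the second (about $0.802$ vs.\ $0.693$), while sensing power prefers the first ($0.500$ vs.\ roughly $0.487$). If these were the only two achievable frequency vectors, the entropy-optimal routing would not be sensing-power-optimal, so the conclusion cannot follow from concavity and the location of the unconstrained optimum alone.

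Neither of your proposed repairs closes this. Relaxing $y_{r}\in[0,1]$ and invoking strict concavity yields a unique maximizer \emph{for each objective separately} over the polytope of achievable $\mathbf{P}^{t}$, but says nothing about those two maximizers coinciding --- that would again require the common unconstrained optimum $1/|\mathcal{A}|$ to lie inside the polytope. And the ``Schur-minimal vertex'' in your option (ii) need not exist: majorization is only a partial order, so the achievable set generically contains several minimal but mutually incomparable elements, which is exactly the situation in the counterexample above. To be fair, the paper's published proof stops at the same place --- it establishes that both objectives attain their bounds iff $P_{ij}^{t}=1/|\mathcal{A}|$ and then asserts equivalence --- so your proposal reproduces the published argument and is more candid about what is missing; but as a complete proof of the proposition as stated, the final step is established by neither.
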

\begin{proof}
According to the inequality of arithmetic and geometric means, we can rewrite the objective function~\eqref{eq:obj} as follows:
\begin{align*}
\exp{H^{t}} = \exp{\left(\sum_{(i,j)\in\mathcal{A}}-P_{ij}^{t}\log(P_{ij}^{t})\right)}& = \prod_{(i,j)\in\mathcal{A}}(P_{ij}^{t})^{P_{ij}^{t}} \\
&\leq \sum_{(i,j)\in\mathcal{A}} P_{ij}^{t}\frac{1}{P_{ij}^{t}} = |\mathcal{A}|
\end{align*}
where the equality holds if and only if $P_{ij}^{t}=\frac{1}{|\mathcal{A}|}$.

On the other hand, for the sensing power, we have:
\begin{align*}
1-\frac{1}{|\mathcal{A}|} \sum_{(i,j)\in \mathcal{A}}\left(1-P_{ij}\right)^{N^{t}}
&\le 1-\left(\prod_{(i,j)\in\mathcal{A}}\left(1-P_{ij}\right)^{N^{t}}\right)^{\frac{1}{|\mathcal{A}|}}~\label{eq:geometry_mean_ineq}\\
&= 1-\left(\prod_{(i,j)\in\mathcal{A}}\left(1-P_{ij}\right)\right)^{\frac{N^{t}}{|\mathcal{A}|}}
\end{align*}
where the equality also holds if and only if $P_{ij}^{t}=\frac{1}{|\mathcal{A}|}$.

Also, both objective functions in Eqs.~\eqref{eq:obj} and \eqref{eq:sensing_power} are concave. 
Hence, the equivalence holds between the information entropy and sensing power.
\end{proof}

\added{Proposition~\ref{prop:equivalent_condition} establishes a connection between entropy maximization and sensing power maximization. In addition, the entropy maximization problem promotes the positive value of $P_{ij}^{t}>0$, which indicates the explicit visits on road segment $(i,j)$, thereby enhancing the overall explicit sensing coverage.}

\section{Solution Approach}\label{sec:solution}
The RVRP~\cite{li2009real} is known to be a challenging problem as it generalizes the Vehicle Routing Problem, which is NP-hard in a strong sense~\cite{solomon1987algorithms}. One approach to solving the RVRP is the Column Generation (CG) technique~\cite{desrochers1992new}, which decomposes the problem into the trip level and iteratively generates feasible trips. Nonetheless, owing to the interdependencies among trips across extended temporal horizons and the exponential surge in trip enumerations inherent to expansive road networks, the exact CG approach becomes computationally burdensome.

Rather than pursuing an exact solution for the RVRP, we propose a CG-based heuristic. This heuristic yields a near-optimal solution considering a collective subset of historical trips over a relatively concise temporal horizon (e.g., 3 hours), which ensures both scalability and computational efficiency for real-world operations.


\subsection{\added{Scalable Heuristic-Driven Column Generation Algorithm}}
This section proposes the Scalable Heuristic-driven Column Generation (SHCG) algorithm. This heuristic approach decomposes the RVRP for each time period $t\in\mathcal{T}$ into independent subproblems for each historical cruising trip $k\in\mathcal{M}^{t}$. In this case, we have $\mathcal{R}^{t}=\bigcup_{k\in\mathcal{M}^{t}}\mathcal{R}_{k}^{t}$, and the objective is to generate a subset of feasible trips $\mathcal{R}_{k}^{t}$ subject to trip $k$.

The SHCG algorithm proceeds by arranging all trips $k\in\mathcal{M}^{t}$ within each time interval $t$ in ascending order of their pick-up time schedule $T_{k}^{o}$. Subsequently, the K-shortest path algorithm~\cite{yen1971finding} with a predefined value of $K$ is applied to find $|\mathcal{R}_{k}^{t}|=K$ feasible paths until the time constraint is violated (e.g., $T_{r}^{0}+t_{r}>T_{k}^{d}$ for a trip $r\in\mathcal{R}_{k}^{t}$). This process ensures the existence of a viable solution for each historical cruising trip, equating to the original historical trip.

Importantly, the SHCG algorithm attains exact solutions for the RVRP when the search range $K$ extends to infinity. However, this advancement is met with a polynomial increase in time complexity (see Section~\ref{sec:time_complexity}). Furthermore, while optimal trips are readily identified within a narrow search range, the enhancement in sensing performance can be marginal. Hence, a strategic selection of the $K$ range is imperative, effectively balancing the trade-off between computational load and improvements in sensing performance.
The detailed solution procedure is summarized in Algorithm~\ref{algo:k-pba}.

\begin{algorithm}[H]
\footnotesize
	\caption{SHCG: Scalable Heuristics-driven CG Algorithm} 
        \textbf{Input:} Historical trip information $\{(L_{k}^{o},L_{k}^{d},T_{k}^{o},T_{k}^{d}):k\in\mathcal{M}^{t},t\in\mathcal{T}\}$; Transportation network $\mathcal{G}=\left(\mathcal{N},\mathcal{A}\right)$; Number of shortest paths $K$. Time window relaxation $\delta$\\
        \textbf{Output:} Set of optimal rerouting trips $\mathcal{R}^{t,*}$\\
	\begin{algorithmic}[1]
	    \State \textbf{Sort} historical trips $M^{t}$ from earliest to latest for sequential decision-making
            \For{$k \in \mathcal{M}^{t}$}\Comment{Obtain set of candidate rerouting trips $\mathcal{R}_{k}^{t}$}
                \State \textbf{Obtain} $\mathcal{R}_{k}^{t}\leftarrow$\textsc{KShortestPath}$\left(\mathcal{G}, K, \delta,\{L_{k}^{o},L_{k}^{d},T_{k}^{o},T_{k}^{d}: k\in\mathcal{M}^{t}\}\right)$
                \State \textbf{Initiate} descent directions $\Delta \mathbf{H}^{t}=\varnothing$
                \For{$r\in \mathcal{R}_{k}^{t}$}\Comment{Check each candidate shortest path}
                    \If{$t_{r}\leq \delta\left(T_{m}^{d}-T_{m}^{o}\right)$} \Comment{Feasibility of time window}
                        \State \textbf{Calculate} sensing frequency $P_{ij}^{t}$ following Eq.~\eqref{eq:cons1} by replacing $\mathcal{R}^{t}$ with $\mathcal{R}^{t,*}$
                        \State \textbf{Calculate} sensing entropy $H^{t}$ following Eq.~\eqref{eq:obj}
                        \State \textbf{Update} the increment on sensing entropy $\Delta \mathbf{H}^{t}\leftarrow \Delta \mathbf{H}^{t}\cup H^{t}$
                    \EndIf
                \EndFor
            \State\textbf{Obtain} optimal trip with descent direction $r_{k}^{*}\leftarrow \argmax_{r} \Delta \mathbf{H}^{t}$ 
            \State \textbf{Update} the set of optimal rerouting trips $\mathcal{R}^{t,*}\leftarrow \mathcal{R}^{t,*}\cup \{r_{k}^{*}\}$
            \EndFor
        \State \textbf{Return} set of optimal rerouting trips $\mathcal{R}^{t,*}$, altering historical trips $\mathcal{M}^{t}$
    \end{algorithmic}
    \label{algo:k-pba}
\end{algorithm}

\subsection{Assumption on trip independence}
The SHCG algorithm decomposes the RVRP to trip level and attains high-quality solutions by sequentially handling the historical trips. It relies on the assumption stated below.

\begin{assumption}[Trip Independence]
    The optimal trips $r\in\mathcal{R}^{t,*}$ (i.e., $y_{r}=1$) is independent in small time intervals.\label{assumption:independence}
\end{assumption}

The assumption of trip independence is based on the observation that the rerouting decisions for each trip $k\in\mathcal{M}^{t}$ in a real-world system may only have weak dependencies on each other, as evidenced by empirical observations in Fig.~\ref{fig:before_after_optimal_c}. In practice, only a few drivers might require simultaneous route decisions at fine time scales (e.g., every few seconds), making it challenging to coordinate routing needs for future cruising trips in small time intervals. Moreover, the OD locations of each cruising trip can vary significantly in a large-scale road network, implying that the rerouting trips may also be spatially independent and solvable separately. Considering these factors, we decompose the original problem into a series of routing decisions for each cruising trip $k\in\mathcal{M}^{t}$. Each cruising trip aims to maximize the current entropy value of sensing frequency myopically, based on decisions made by previous trips, while ensuring feasibility.

Based on Assumption~\ref{assumption:independence}, the solution to the RVRP, achieved by sequentially handling the historical trips, results in an optimal solution. Specifically, it suggests that selecting one altered trip $r\in\mathcal{R}^{t}$ for a historical trip $k\in\mathcal{M}^{t}$ does not influence the selection of altered trips for subsequent historical trips in $\mathcal{M}^{t}$. Therefore, under this assumption, the RVRP can be effectively solved using a CG-based algorithm that sequentially considers each historical trip, allowing the construction of a feasible and optimal set of alternative trips $\mathcal{R}^{t,*}$ for the unoccupied RVs at time $t$.

To ensure the validity of this assumption, we propose the \textit{independence score} ($I^{t}$), which measures the extent to which the optimal trips satisfy the spatial-independence conditions. The independence score $I^{t}$ is calculated based on the cosine distance between the optimal sensing visits ($\mathbf{v}^{t}$) and the expected link flow $\overline{\mathbf{v}}^{t}$, defined as follows:
\begin{subequations}
    \begin{equation}
        \mathbf{v}^{t}=\left(\sum\limits_{r\in\mathcal{R}^{t,*}}\beta_{ij}^{r}: (i,j)\in\mathcal{A}\right)
    \end{equation}
    \begin{equation}
        \overline{\mathbf{v}}^{t}= \left(\min\{\sum\limits_{r\in\mathcal{R}^{t,*}}\beta_{ij}^{r},1\}: (i,j)\in\mathcal{A}\right)
    \end{equation}
Here, each element $v^{t}_{ij}\in \mathbf{v}^{t}$ represents the traversed counts on $(i,j)\in\mathcal{A}$ during time period $t$. The expected link flow is considered as a binary vector with the length of $|\mathcal{A}|$, indicating the case that all visited road segments are exactly traversed once, and unvisited segments are zero.

The independence score $I^{t}$ is then calculated as the cosine similarity between $\mathbf{v}^{t}$ and $\overline{\mathbf{v}}^{t}$, ranging from 0 to 1:
\begin{equation}
    I^{t}=\frac{\mathbf{v}^{t}\cdot \overline{\mathbf{v}}^{t}}{||\mathbf{v}^{t}||||\overline{\mathbf{v}}^{t}||}
\end{equation}
where independence score $I^{t}=1$ indicates that the two vectors are perfectly same. It implies that the trips $r$ in optimal rerouting strategy are independent and each link $(i,j)\in\mathcal{A}^{t,*}$ is traversed exactly once. Conversely, $I^{t}=0$ represents two completely different vectors (e.g., all links are visited more than once in our case).
\end{subequations}

\subsection{Time complexity of SHCG Algorithm}\label{sec:time_complexity}
The time complexity of Algorithm~\ref{algo:k-pba} is $O(|\mathcal{M}^{t}|K(|\mathcal{N}|+\mathcal{A})|\log \mathcal{N})$, which depends on parameters $K$ and the scale of road network $\mathcal{G}=(\mathcal{N},\mathcal{A})$. Specifically, parameter $K$ determines the size of the set of alternative trips $|\mathcal{R}_{k}^{t}|$, where a larger $K$ leads to a more exhaustive search for potential trips. However, increasing $K$ can significantly escalate computational time, posing challenges for large-scale experiments. 
Additionally, the relaxation on time window $\delta$ also has an impact on both solution quality and computational efficiency. Specifically, $\delta$ represents the level of relaxation in the time window, expressed as $\delta(T_{k}^{d}-T_{k}^{o})$, allowing for additional travel time compared to the historical cruising trip. A higher value of $\delta$ results in more alternative relocation routes, potentially leading to better sensing performance. 
An extreme case of the SHCG algorithm is when $K=1$, which is equivalent to identifying the route with the lowest sensing frequency using Dijkstra's Algorithm ($O(|\mathcal{A}|\log |\mathcal{N}|)$). In this scenario, the RVs are sequentially assigned to routes that are rarely visited while still satisfying the time window constraints. By varying the values of $K$ and $\delta$, we aim to identify an appropriate combination of $K$ and $\delta$ that strikes a balance between achieving a high-quality solution and maintaining acceptable computational efficiency.

\section{Numerical Experiment}\label{sec:experiment}

\subsection{Data processing}\label{sec:data}
We use the RV trajectory data in Manhattan, New York City (NYC), to gain insights into the sensing range of the RV fleet over the urban road network. We collected the RV trajectory data in April 2017 using the method as described in~\cite{qian2020impact}. The collected trajectory records include the information of timestamp (in Unix), latitude, longitude, and driver ID (only the last six letters are shown here). A sample of collected trajectory records consists of RV driver's ID, time, and real-time location (latitude and longitude).

We fetched around \SI{100}{GB} of data per day in 2017 by constructing 470 data collection stations. The data quality was validated by inferring the number of complete trips and comparing it with the historical trips reported by the NYC Taxi \& Limousine Commission for every 15-minute time interval~\cite{qian2020understanding}. Specifically, one complete trip is identified if (1) the time gap (\SI{1}{min}$\leq \Delta t\leq$\SI{120}{min}) and spatial displacement ($\Delta d\geq$\SI{400}{m}) between consecutive records or (2) the record was the last trajectory identified for the driver ID. We consider the trip with a time gap $\Delta t>$ \SI{120}{min} as an online/offline transition. In this case, we split the trajectories into two shifts and conduct the trajectory processing within the shifts.

After identifying occupied trips from the trajectory of vacant vehicles, we next augment the initial set of vacant trajectories with the inferred trajectories of the occupied trips. For the trajectories of the vacant trips, we only adopt the coordinates per 15-second interval. As for the occupied trips, only the trip's origin-destination (OD) information is derived, and the detailed trajectory cannot be immediately observed for at least $10$ minutes due to the data limitation. In light of this, we next infer the occupied trip trajectory using the trip's OD information by querying the shortest path in the road network via the Open Source Routing Machine (OSRM)~\cite{luxen-vetter-2011}. The query results from the OSRM consist of the coordinates of traversed intersections and the associated timestamps. Based on that, we further interpolate the sequence of coordinates into 15-second intervals for the sake of consistency. 

We next match the coordinates to the nearest road segments in the \textit{backbone} network in Manhattan in NYC. In this case, we use sequences of road IDs to represent the trajectories of each RV. Note that the backbone road network only includes `primary' and `secondary' roads according to~\cite{boeing2017osmnx,OpenStreetMap}, which are specified below:
3

These backbone roads primarily cover the urban area and are of priority to be sensed. Moreover, the backbone road network can (1) reduce the computational burden under a smaller problem size and (2) diversify the choices of cruising routes in our solution algorithm.

\subsection{Evaluation Metrics}\label{sec:eval}
This section introduces the metrics to evaluate the sensing performance. We also present the incentive mechanism to investigate the supplementary incentives for the improvement of sensing performance. 

\subsubsection{Link observability}
We aim to sense the transportation network $\mathcal{G}=(\mathcal{N},\mathcal{A})$ on the road segment level. Along with the ECR ($r^{\text{exp}}$) defined in Eq.~\eqref{eq:ECR} based on the observable segments $\mathcal{A}_{\text{ob}}$, we extend our focus to include inferrable information based on explicit observations and network topology. To illustrate this, let's consider a transportation corridor comprising $2N$ road segments. By deploying fixed sensors on every other road segment, the entire corridor can be fully observed using only $N$ sensors. This observation leads us to refine our understanding of link observability in two metrics: 
\begin{itemize}
    \item  $\mathcal{A}_{\text{inf}}$: the number of inferable links based on existing observations on links ($\mathcal{A}_{\text{ob}}$).
    \item  $\mathcal{A}_{\text{add}}$: the minimum number of additional links to be sensed to ensure that the entire network can be either directly observed or inferred, i.e., $\mathcal{A}=(\mathcal{A}_{\text{ob}}+\mathcal{A}_{\text{add}})\cup\mathcal{A}_{\text{inf}}$.
\end{itemize}

Based on that, we introduce two new sensing coverage measures: the inferred coverage rate (InfCR), denoted as $r^{\text{inf}}$, and the implicit coverage rate (ImpCR), denoted as $r^{\text{imp}}$. These measures are formally expressed as follows:

\begin{subequations}
\begin{equation}
r^{\text{inf}} = \frac{|\mathcal{A}_{\text{ob}}|+|\mathcal{A}_{\text{inf}}|}{|\mathcal{A}|}
\label{eq:inferred-coverage-rate}
\end{equation}

\begin{equation}
r^{\text{imp}} =\frac{|\mathcal{A}_{\text{ob}}|}{|\mathcal{A}_{\text{ob}}|+|\mathcal{A}_{\text{add}}|}
\label{eq:observation-coverage-rate}
\end{equation}
\end{subequations}

Interested readers can refer to~\citet{ng2012synergistic} and \citet{guo2022sensing} for a more detailed implementation procedure of finding the inferable and additional links ($\mathcal{A}_{\text{inf}}$ and $\mathcal{A}_{\text{add}}$) based on existing observations.

\subsubsection{Incentive for RV rerouting}
We consider incentivizing RV drivers to follow the optimal rerouting strategy. This is achieved through the computation of rerouting costs for each RV driver, based on the revenue paradigm established in the context of order fulfillment~\cite{nyc_driver2016}. 
Specifically, we adopt a linear function that accounts for both the cruising distance $\Delta D_{k}$ and the duration $\Delta T_{k}$ of a given trip $k$. Parameters $\theta_{D}$ and $\theta_{T}$ serve as coefficients for unit distance and time, respectively, which is assumed to be $\theta_{D}=0.631$ and $\theta_{T}=0.287$ based on the empirical evidence~\cite{nyc_driver2016}. Recognizing the various potential outcomes of implementing the rerouting strategy, which may lead to either savings or additional cruising distance and idling time for trips $k\in\mathcal{M}^{t}$, we exclusively consider positive values. These values represent the additional incentive required to accommodate the RV drivers. The rerouting cost for trip $k\in\mathcal{M}^{t}$, denoted as $b^{t}_{k}$, is formulated as follows:
\begin{subequations}
\begin{equation}
    b^{t}_{k} = \theta_{D}\max\{0,\Delta D_{k}\}+\theta_{T}\max\{0,\Delta T_{k}\}
\end{equation}

Furthermore, we propose the incentive mechanism for the sensing task, drawing inspiration from previous works on incentives in movement-based crowdsensing~\cite{tian2017movement,liu2020incentive}. We define an  incentive to alter the historical trip $k$, denoted as $B^{t}_{k}$, expressed as an exponential function:

\begin{equation}
    B_{k}^{t} = e^{\eta \cdot b_{k}^{t}} - 1
\end{equation}

Here, $\eta$ represents a scaling parameter tied to the rerouting cost. In Section~\ref{sec:sen_on_budget}, we will delve into comprehensive sensitivity analyses on $\eta$ to investigate the optimal balance between $\eta$ and the improvement in the sensing performance.

Finally, let $K^{t}$ denote the RV fleet size at time $t$, and the average incentive for each RV driver is determined by:
\begin{equation}
    B^{t}= \frac{1}{K^{t}}\sum_{k\in\mathcal{M}^{t}} B_{k}^{t}
\end{equation}

\end{subequations}
\subsection{Results of historical trips}\label{sec:historical_results}
This section presents the empirical results based on the historical records. To gain a comprehensive understanding of the time-varying sensing performance, we divided the RV trip trajectory into six three-hour time intervals based on the trip starting time, ranging from 9 AM to 3 AM the next day. Note that the trajectory information between 3 AM and 9 AM is unavailable due to data limitations. In our analysis, we focus on the spatial and temporal distribution of sensing frequency, the potential for information inference, and the sensing reliability using historical RV trajectory data. 
To facilitate better descriptions in the following subsections, we introduce the terms \textit{main roads} and \textit{branch roads} to characterize two types of road segments. Specifically, main roads are those with higher connectivity and have a greater potential for information inference. On the other hand, branch roads are secondary roads that may connect to main roads or other branch roads, which are often observed in more distant regions with lower sensing frequency.

\subsubsection{Historical sensing frequency}
This section presents the spatial distribution of sensing frequency in Manhattan, NYC, for three selected time periods on a Wednesday (April 19, 2017). Specifically, Fig.~\ref{fig:veh_sensing_points} shows hot-spot areas near Turtle Bay neighborhoods (circled), where certain links attract more than 500 visits across all three-hour periods. Downtown Manhattan (squared) also serves as another hot spot, especially during 3 PM - 6 PM and 9 PM - 12 AM, indicating extensive short-length trips of unoccupied RVs to and from this area. These hot spots have higher sensing frequency, allowing vehicles to serve available requests promptly and find potential passengers quickly. In total, these two areas account for 29.3\%, 21.5\%, and 24.0\% of all visits during the three time periods.

\begin{figure}[h]
\centering
    \includegraphics[width = 1.0\linewidth]{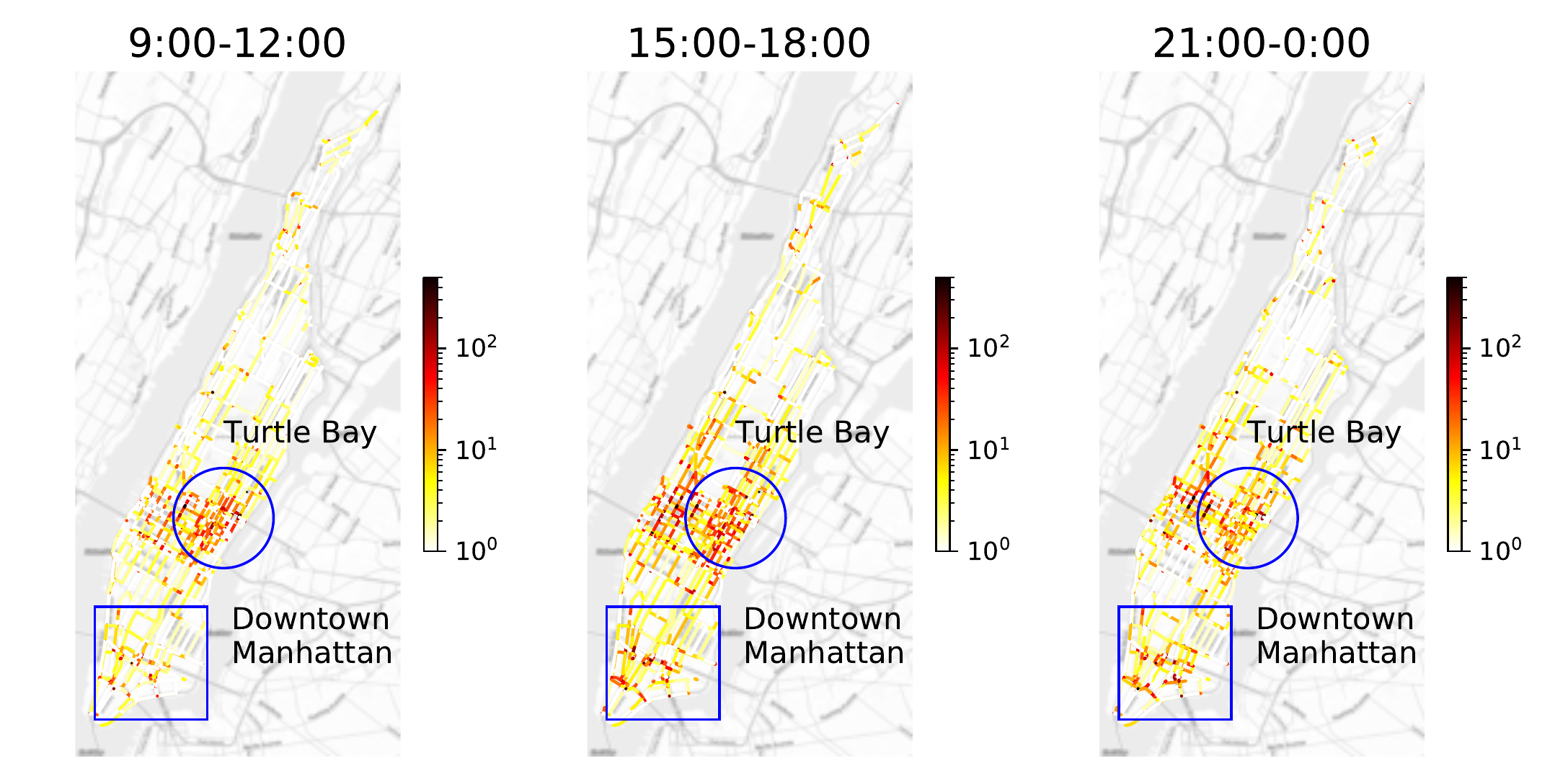}  \caption{Spatial distribution of the RV sensing frequency (on a log scale of 10). The circled area indicates the hot spots near Turtle Bay, and the squared area is Downtown Manhattan.}\label{fig:veh_sensing_points}
\end{figure}

\subsubsection{Historical sensing coverage and reliability}
We next present the weekly average historical sensing performance, focusing on the temporal distribution of coverage rates from 9 AM to 3 AM (of the next day), the sensing power under different fleet sizes, and the independence scores during six different time periods to validate the assumption of trip independence (refer to Assumption~\ref{assumption:independence}).

\begin{figure}[H]
    \subfloat[Sensing Coverage: ECR, InfCR, and ImpCR]{\includegraphics[width=0.5\textwidth]{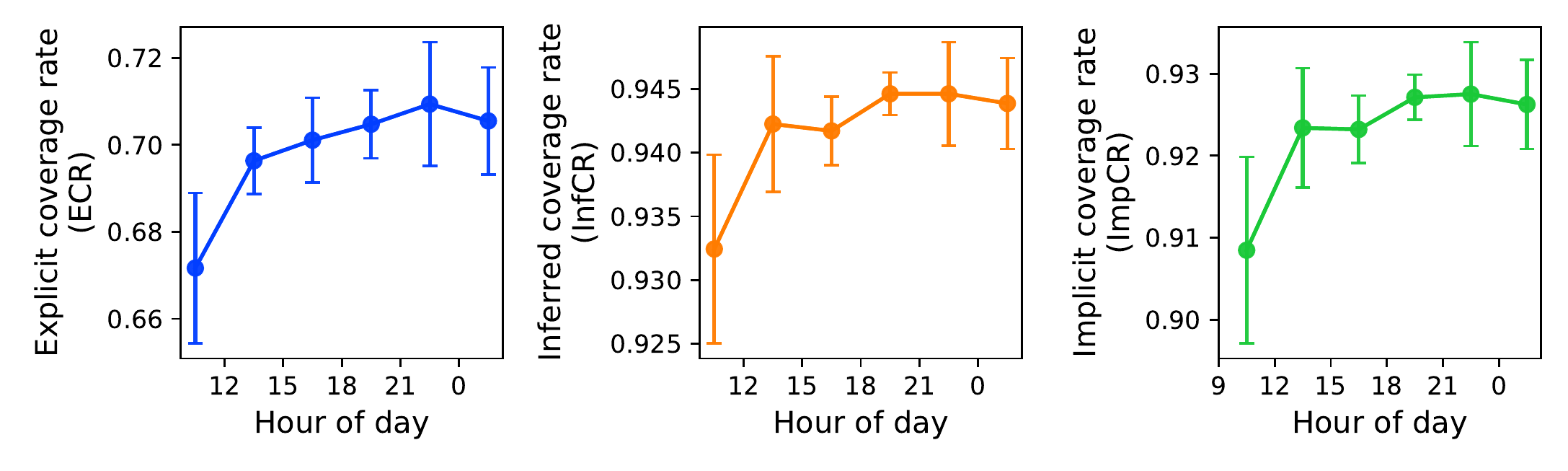}\label{fig:info_infer}}
    \hfill
    \subfloat[Sensing power under different fleet sizes]{\includegraphics[width = 0.17\textwidth]{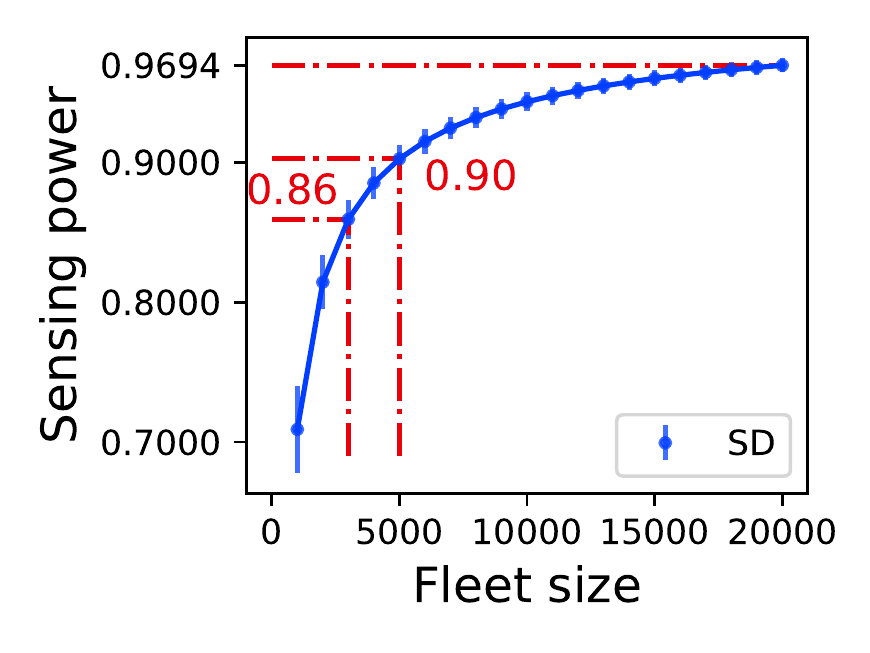}\label{fig:sensing_power}}
    \subfloat[ECR under different fleet sizes]{\includegraphics[width = 0.17\textwidth]{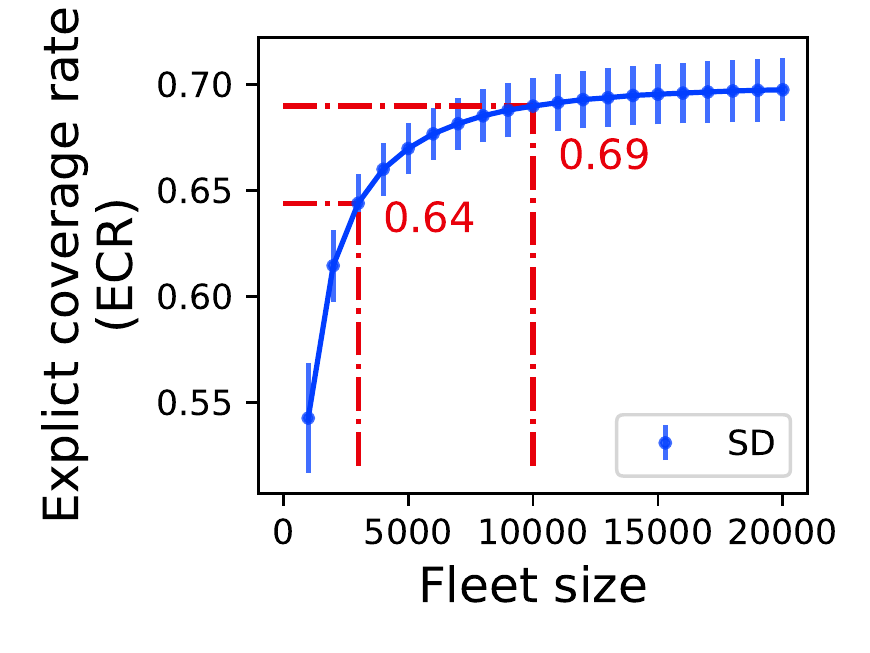}\label{fig:sensing_coverage}}
    \subfloat[Independence Score]{\includegraphics[width=0.16\textwidth]{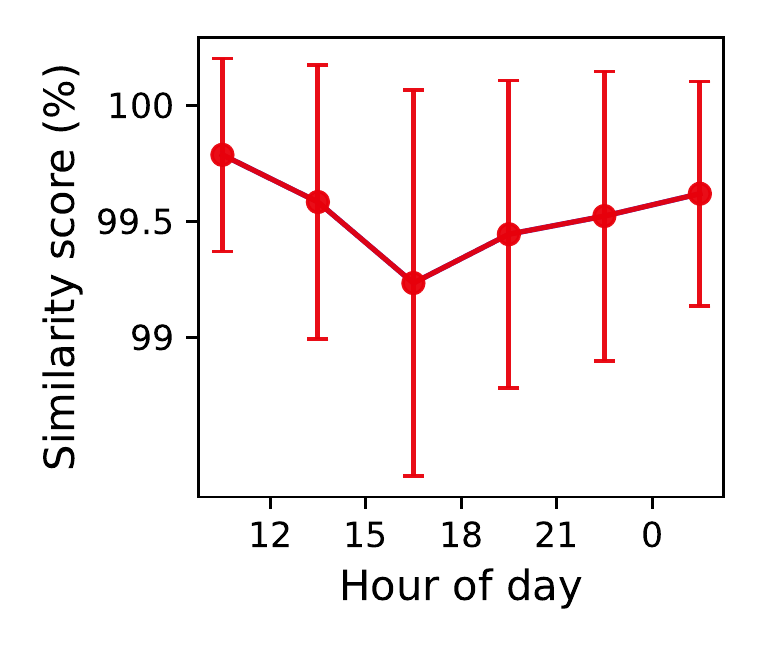}\label{fig:before_after_optimal_c}}
    \caption{Historical sensing performance (coverage, reliability, and independence)}\label{fig:before_after_inference}
\end{figure}

Fig.~\ref{fig:info_infer} shows the temporal distribution of coverage rates. We report that the morning periods (9 AM - 12 PM) have the lowest coverage rates due to commuters following fixed routes from residential areas to workplaces. However, the road segments in other areas are rarely visited, resulting in low ECRs. Moreover, the coverage on main roads is better than on branch roads during morning hours, especially in distant regions, which makes information inference challenging due to few visits and complex road networks. Similar trends are observed for ImpCRs, starting at the lowest level at 9 AM and plateauing from 6 PM to 9 PM. Over 90.8\% of links with inferring potential are already traversed, indicating high coverage on main roads.

Regarding sensing power across various sampled fleet sizes $K^{t}$ (ranging from 1,000 to 20,000 RVs), Fig.~\ref{fig:sensing_power} indicates a marginal effect in the sensing power, with 50\% of RVs achieving a sensing power of 0.90, close to the maximum. Also, a sensing power of 0.86 only requires 3,000 RVs (about 15\% of the fleet), indicating high reliability for daily operations. 
Fig.~\ref{fig:sensing_coverage} showcases the ECR, which closely approximates an exponential function. For instance, an ECR of 69.0\% can be achieved with a fleet of 10,000 RVs (representing half of the fleet), while a solid ECR of 64.0\% is sustained by a contingent of 3,000 RVs (equivalent to 15\% of the fleet). 

Further, we validate Assumption~\ref{assumption:independence} regarding the spatial independence of optimal trips by presenting the independence score $I^{t}$ in Fig.~\ref{fig:before_after_optimal_c}. Specifically, we adopt a rolling time window of 15 minutes within a three-hour period. Then, we calculate the independence score within the 15-minute rolling time window in different scenarios (e.g., time window threshold $\delta$ and the shortest-path search rage $K$). The average independence score exceeds 99.0\%, indicating a high level of independence in different scenarios and time periods. Furthermore, the standard deviations exhibited in the results underscore the robustness and reliability of the trip independence assumption.

\subsection{Results of optimal rerouting strategy}
In the following analyses, we compare the sensing performance improvement in terms of sensing coverage and reliability between the historical baseline and the optimal rerouting strategy. We examine three different time periods: morning, afternoon, and night. Specifically, we present the detailed sensing performance, travel mileage, and idling duration for different time periods, travel time threshold ($\delta$), and K-shortest path search range ($K$). The key metrics for sensing performance and related variables are as follows:
\begin{itemize}
    \item $r^{\text{exp}},r^{\text{inf}},r^{\text{imp}}$: ECR, InfCR, ImpCR
    \item $H^{t},S^{t}$: entropy and sensing power at time period $t$
    \item $B^{t}$: incentive for each RV driver to follow rerouting strategy
    \item $\Delta D, \Delta T$: changes in the travel distance/idle duration between the optimal rerouting and the historical trajectory (positive value indicates additional travel mileage).
\end{itemize}

\subsubsection{Comparison on the spatial distribution of sensing coverage}

\begin{figure}[h]
    \centering
    \subfloat[Spatial distribution of sensing coverage and frequency]{
    \includegraphics[width=0.45\textwidth]{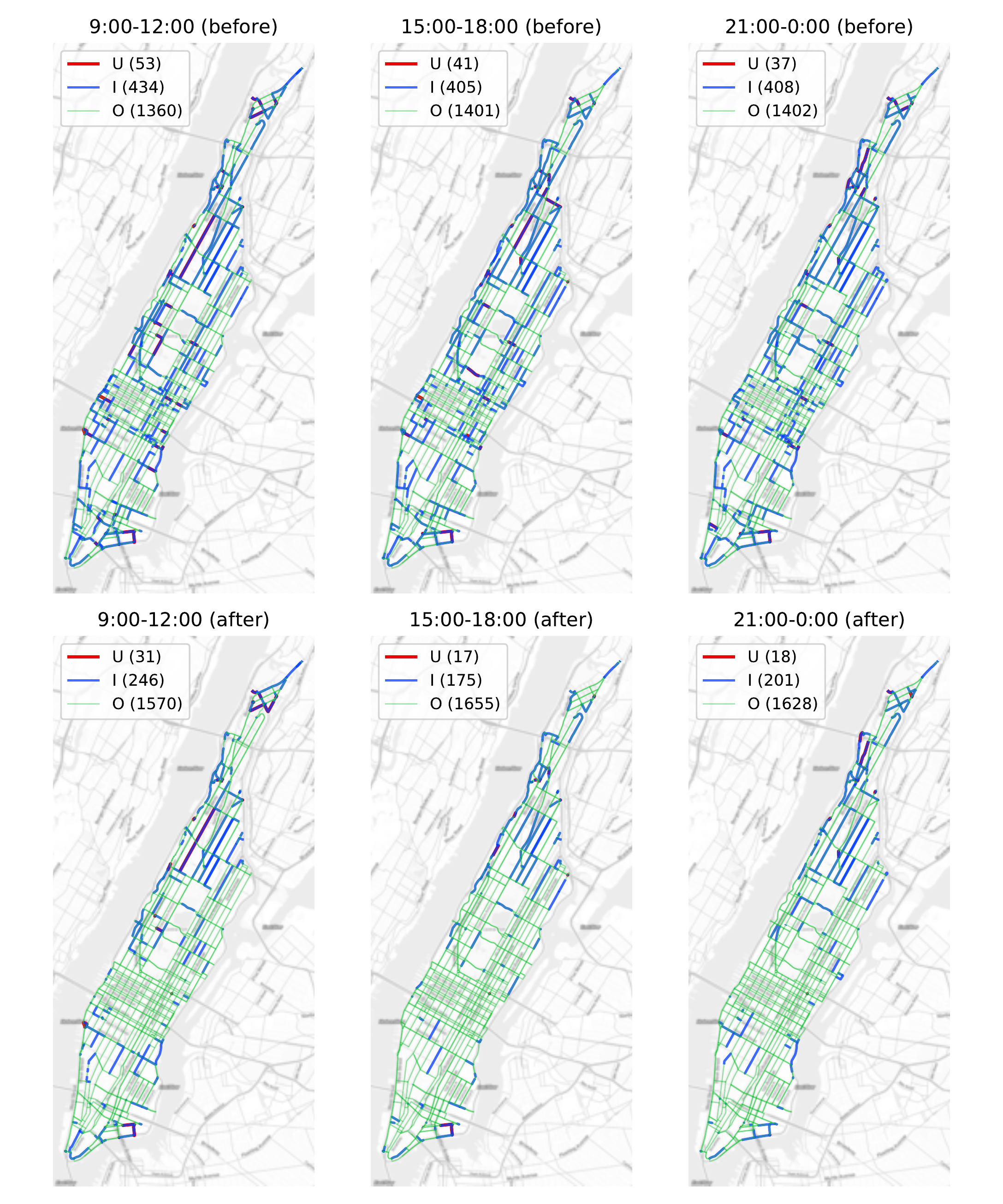}
    \label{fig:before_after_optimal_a}}\\
    \subfloat[Distribution of sensing frequency]{\includegraphics[width=0.45\textwidth]{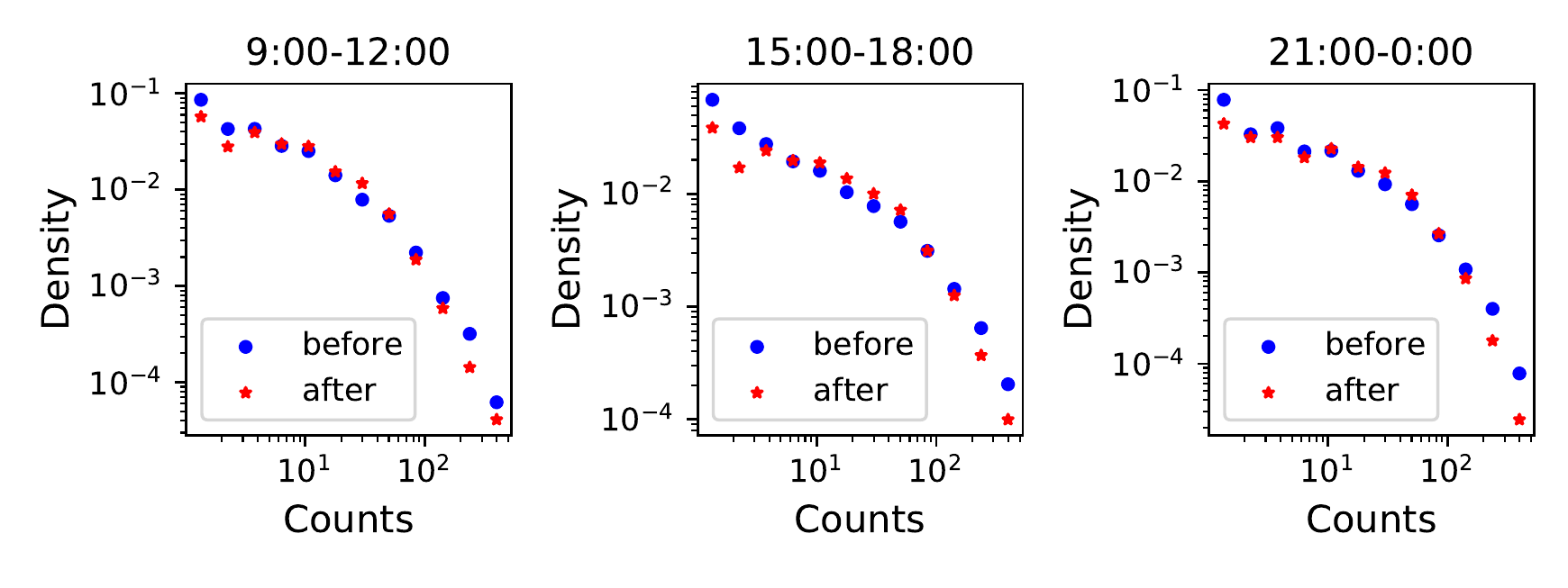}\label{fig:before_after_optimal_b}}
    \caption{Comparison between the historical trips and optimal rerouting trips ($\delta=1, K=20$, U: unknown, I: inferred, O: observed.)}\label{fig:before_after_optimal}
\end{figure}

We first define three types of links for better illustration: unknown (U), inferred (I), and observed (O) links to represent link observability. 
As shown in Fig.~\ref{fig:before_after_optimal}, after implementing the optimal rerouting strategy, a maximum of 254 additional links are directly visited, mainly in the Middle Manhattan area. This indicates that historical cruising trips in these areas likely followed repetitive routes, and rerouting encourages RVs to traverse unobserved and low-frequency links within their travel time threshold. In Upper Manhattan during the afternoon and evening periods (3-6 PM and 9 PM-12 AM), some links originally classified as unknown or inferred are now observed explicitly. However, the improvement in link observability is limited, particularly during the late-night period (9 PM-12 AM), due to few cruising trips in these areas.

Beyond link observability, we analyze the distribution of sensing frequency during the three time periods. Before rerouting, the distribution is highly skewed, with many links visited fewer than 25 times in the three-hour period. After rerouting, we observe a more even distribution, with less-visited links receiving more coverage, and a reduced density of links with low visit counts. Additionally, fewer links have high visit counts after rerouting, as RVs using repetitive paths now redistribute to cover additional links or links with low visit counts. This redistribution leads to improved explicit and implicit link coverage rates, as reflected in the distributions of unknown and inferred links before and after rerouting.

\subsubsection{Comparison on temporal distribution of sensing performance}

\begin{figure*}
\centering
    \includegraphics[width =\linewidth]{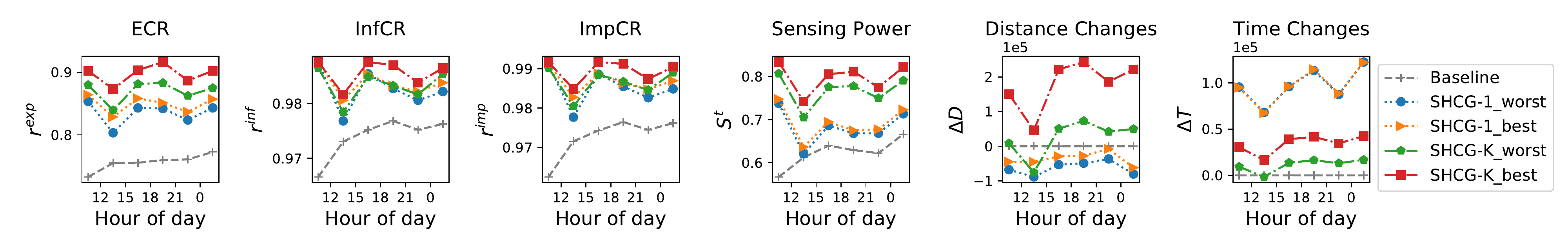}
    \caption{Comparison between SHCG-1, SHCG-K and historical baseline.}\label{fig:hist_comparison}
\end{figure*}

Fig.~\ref{fig:hist_comparison} compares the results of the optimal rerouting strategy and the historical baseline. To facilitate the discussion, we use "SHCG-K" to refer to the SHCG algorithm with multiple shortest path candidates ($K>1$), and "SHCG-1" for the case when $K=1$. The worst sensing performance in SHCG-K outperforms the best performance in SHCG-1 for all three time periods. For instance, during 9 PM - 12 AM, the best ECR and sensing power in SHCG-K under $\delta=1.2$ are 0.87 and 0.74, respectively, which are 3.6\% and 11.1\% higher than the most restrictive scenario in SHCG-1 ($\delta=0.9, K=1$).

Regarding travel distance and duration, all scenarios exhibit mileage savings in the rerouting trips and prolonged idle duration compared to the historical baseline and the results of SHCG-K and SHCG-1. Specifically, SHCG-1 tends to select the shortest path with the least sensing frequency, resulting in locally traversing road segments with fewer visits and more waiting time. In contrast, SHCG-K focuses globally on the entropy of sensing frequency, which may lead to deviations from the original path to achieve a higher entropy within the time window constraint.

\subsubsection{Sensitivity analyses on time window width $\delta$ and path search range $K$}

Table~\ref{tab:rerouting_performance_comparison} provides a comprehensive overview of the sensing performance outcomes resulting from our devised rerouting strategy. A larger $\delta$ indicates a wider time window from their original trajectories, potentially entailing longer travel times, while a larger value of $K$ expands the searching space of admissible alternative paths. Notably, instances of negative additional cruising distance values during morning periods reveal that our rerouting strategy can indeed yield shorter distances compared to historical trajectories, even in cases involving longer routes.

Focusing on scenarios with lower rerouting distances, we observe significant enhancements in ECR and sensing power, highlighting improvements of at least 14.5\% and 24.0\%, respectively. These promising outcomes extend beyond morning periods, holding true for both afternoon and night periods, thereby consistently showcasing the superiority of the optimal rerouting strategy. Remarkably, even within the most stringent conditions (i.e., $\delta=0.9$, $K=20$ in the afternoon time period), we report the enhancements in sensing coverage of up to 17.3\% in ECR $r^{\text{exp}}$, a minimum of 4.3\% increase in sensing entropy $H^{t}$, and an impressive 27.8\% improvement in sensing power $S^{t}$. 
\added{In comparison with the most stringent scenario, we observe a positive increase of 2.8\% in ECR and 3.8\% in sensing power, which comes with the trade-off of exhaustive computations for shortest paths across an expanded search space (e.g., higher values of $K$ and $\delta$). Hence, we conclude that a rerouting strategy with a high-quality solution remains attainable through a larger search space. Notably, even in the most stringent cases, remarkable enhancements compared to the historical baseline are achieved, highlighting the efficacy of our rerouting model.}

\begin{table}[h]
    \scriptsize
    \caption{Results of optimal rerouting strategy and historical baseline}
    \label{tab:rerouting_performance_comparison}
    \begin{tabular}{p{0.1cm}lp{0.19cm}p{0.19cm}p{0.19cm}p{0.19cm}p{0.19cm}rrp{0.19cm}p{0.19cm}p{0.19cm}}
    \toprule
    $t$& $(\delta,K)$ & $r^{\text{exp}}$ & $r^{\text{imp}}$ &  $r^{\text{inf}}$ &  $H^{t}$ &  $S^{t}$ &  $\Delta D$ &  $\Delta T$ &  \multicolumn{3}{c}{$B^{t}(\eta)$} \\\cmidrule(lr){10-12}
    &&&&&&&&& 0.2 & 0.3 & 0.4\\
    \midrule
   M &  Hist &   0.73 &   0.96 &     0.97 &     6.43 &          0.57 &               0.00 &                0.00 &       - &          -&         -  \\\hline
   M &  (0.9,20) &  0.84 &   0.98 &     0.98 &     6.70 &          0.71 &             -16.84 &               -1.18 &   0.1 &  0.19 &  0.39 \\
   M &  (0.9,40) &  0.84 &   0.98 &     0.98 &     6.72 &          0.71 &             -12.16 &                0.95 &  0.11 &  0.23 &  0.45 \\
   M &  (0.9,60) &  0.85 &   0.98 &     0.98 &     6.73 &          0.72 &              -9.74 &                2.05 &  0.12 &  0.25 &  0.48 \\
   M &  (0.9,80) &  0.85 &   0.98 &     0.98 &     6.74 &          0.72 &              -8.46 &                2.63 &  0.13 &  0.26 &   0.5 \\
   M &  (1.0,20) &  0.85 &   0.98 &     0.98 &     6.72 &          0.71 &             -11.18 &                1.39 &  0.11 &  0.23 &  0.44 \\
   M &  (1.0,40) &  0.85 &   0.98 &     0.98 &     6.74 &          0.72 &              -5.74 &                3.87 &  0.13 &  0.26 &  0.51 \\
   M &  (1.0,60) &  0.86 &   0.98 &     0.98 &     6.75 &          0.73 &              -2.77 &                5.21 &  0.15 &  0.29 &  0.56 \\
   M &  (1.0,80) &  0.86 &   0.98 &     0.98 &     6.76 &          0.73 &              -1.07 &                5.99 &  0.16 &  0.31 &   0.6 \\
   M &  (1.1,20) &  0.85 &   0.98 &     0.98 &     6.74 &          0.72 &              -6.65 &                3.45 &  0.17 &  0.46 &  1.48 \\
   M &  (1.1,40) &  0.86 &   0.98 &     0.98 &     6.76 &          0.73 &              -0.65 &                6.18 &  0.19 &  0.51 &  1.56 \\
   M &  (1.1,60) &  0.86 &   0.98 &     0.98 &     6.77 &          0.74 &               2.67 &                7.69 &  0.21 &  0.55 &  1.66 \\
   M &  (1.1,80) &  0.87 &   0.98 &     0.98 &     6.78 &          0.74 &               4.87 &                8.69 &  0.22 &  0.57 &  1.73 \\
   M &  (1.2,20) &  0.86 &   0.98 &     0.98 &     6.75 &          0.73 &              -2.51 &                5.33 &  0.18 &  0.49 &  1.52 \\
   M &  (1.2,40) &  0.86 &   0.98 &     0.98 &     6.78 &          0.73 &               4.18 &                8.37 &  0.21 &  0.54 &  1.64 \\
   M &  (1.2,60) &  0.87 &   0.98 &     0.98 &     6.78 &          0.74 &               8.16 &               10.18 &  0.22 &  0.57 &  1.69 \\
   M &  (1.2,80) &  0.87 &   0.98 &     0.98 &     6.80 &          0.74 &              10.14 &               11.08 &  0.24 &  0.61 &  1.78 \\\hline
   A &  Hist &    0.75 &   0.97 &     0.97 &     6.50 &          0.61 &               0.00 &                0.00 &       - &         - &         - \\\hline
   A &  (0.9,20) &  0.88 &   0.99 &     0.98 &     6.78 &          0.78 &              15.17 &                7.78 &  0.13 &  0.25 &  0.45 \\
   A &  (0.9,40) &  0.89 &   0.99 &     0.98 &     6.80 &          0.79 &              22.89 &               10.43 &  0.16 &   0.3 &  0.53 \\
   A &  (0.9,60) &  0.90 &   0.99 &     0.98 &     6.81 &          0.79 &              26.28 &               11.58 &  0.17 &  0.32 &  0.58 \\
   A &  (0.9,80) &  0.90 &   0.99 &     0.98 &     6.81 &          0.79 &              28.98 &               12.51 &  0.18 &  0.34 &  0.62 \\
   A &  (1.0,20) &  0.89 &   0.99 &     0.99 &     6.79 &          0.78 &              22.16 &               10.18 &  0.16 &  0.31 &  0.57 \\
   A &  (1.0,40) &  0.90 &   0.99 &     0.99 &     6.81 &          0.79 &              29.87 &               12.81 &  0.19 &  0.36 &  0.65 \\
   A &  (1.0,60) &  0.90 &   0.99 &     0.99 &     6.82 &          0.80 &              34.43 &               14.38 &   0.2 &   0.4 &  0.73 \\
   A &  (1.0,80) &  0.91 &   0.99 &     0.99 &     6.83 &          0.80 &              37.34 &               15.37 &  0.22 &  0.42 &  0.77 \\
   A &  (1.1,20) &  0.89 &   0.99 &     0.99 &     6.81 &          0.79 &              27.67 &               12.06 &  0.18 &  0.34 &  0.63 \\
   A &  (1.1,40) &  0.90 &   0.99 &     0.99 &     6.83 &          0.80 &              35.64 &               14.79 &  0.21 &   0.4 &  0.73 \\
   A &  (1.1,60) &  0.91 &   0.99 &     0.99 &     6.84 &          0.80 &              41.02 &               16.63 &  0.23 &  0.44 &   0.8 \\
   A &  (1.1,80) &  0.91 &   0.99 &     0.99 &     6.84 &          0.81 &              44.05 &               17.67 &  0.24 &  0.46 &  0.85 \\
   A &  (1.2,20) &  0.90 &   0.99 &     0.99 &     6.82 &          0.79 &              31.52 &               13.38 &  0.19 &  0.36 &  0.66 \\
   A &  (1.2,40) &  0.91 &   0.99 &     0.99 &     6.84 &          0.81 &              41.23 &               16.71 &  0.23 &  0.44 &   0.8 \\
   A &  (1.2,60) &  0.91 &   0.99 &     0.99 &     6.85 &          0.81 &              46.88 &               18.64 &  0.25 &  0.48 &  0.88 \\
   A &  (1.2,80) &  0.92 &   0.99 &     0.99 &     6.86 &          0.81 &              50.46 &               19.87 &  0.26 &  0.51 &  0.93 \\\hline
   N &  Hist &   0.76 &   0.97 &     0.98 &     6.49 &          0.64 &               0.00 &                0.00 &       - &        - &         - \\\hline
   N &  (0.9,20) &  0.88 &   0.99 &     0.99 &     6.82 &          0.79 &               9.94 &                7.42 &  0.09 &  0.17 &  0.32 \\
   N &  (0.9,40) &  0.88 &   0.99 &     0.98 &     6.84 &          0.80 &              15.77 &                9.35 &   0.1 &   0.2 &  0.38 \\
   N &  (0.9,60) &  0.88 &   0.99 &     0.99 &     6.85 &          0.80 &              17.97 &               10.07 &  0.11 &  0.21 &  0.39 \\
   N &  (0.9,80) &  0.88 &   0.99 &     0.99 &     6.86 &          0.80 &              19.70 &               10.65 &  0.11 &  0.22 &  0.41 \\
   N &  (1.0,20) &  0.88 &   0.99 &     0.99 &     6.84 &          0.79 &              16.33 &                9.53 &   0.1 &  0.19 &  0.35 \\
   N &  (1.0,40) &  0.89 &   0.99 &     0.98 &     6.86 &          0.80 &              23.35 &               11.85 &  0.12 &  0.22 &  0.41 \\
   N &  (1.0,60) &  0.89 &   0.99 &     0.99 &     6.87 &          0.81 &              26.49 &               12.89 &  0.12 &  0.24 &  0.44 \\
   N &  (1.0,80) &  0.89 &   0.99 &     0.99 &     6.87 &          0.81 &              27.76 &               13.31 &  0.13 &  0.25 &  0.46 \\
   N &  (1.1,20) &  0.89 &   0.99 &     0.99 &     6.86 &          0.80 &              22.07 &               11.43 &  0.11 &  0.21 &  0.38 \\
   N &  (1.1,40) &  0.89 &   0.99 &     0.99 &     6.88 &          0.81 &              31.09 &               14.41 &  0.13 &  0.26 &  0.48 \\
   N &  (1.1,60) &  0.89 &   0.99 &     0.99 &     6.89 &          0.81 &              34.40 &               15.50 &  0.14 &  0.28 &  0.52 \\
   N &  (1.1,80) &  0.90 &   0.99 &     0.99 &     6.90 &          0.82 &              37.38 &               16.48 &  0.15 &  0.29 &  0.55 \\
   N &  (1.2,20) &  0.89 &   0.99 &     0.99 &     6.88 &          0.81 &              27.99 &               13.38 &  0.12 &  0.22 &   0.4 \\
   N &  (1.2,40) &  0.89 &   0.99 &     0.99 &     6.90 &          0.82 &              37.71 &               16.59 &  0.15 &  0.28 &  0.52 \\
   N &  (1.2,60) &  0.90 &   0.99 &     0.99 &     6.91 &          0.82 &              41.22 &               17.75 &  0.16 &   0.3 &  0.57 \\
   N &  (1.2,80) &  0.90 &   0.99 &     0.99 &     6.92 &          0.82 &              44.36 &               18.79 &  0.17 &  0.32 &  0.61 \\
\bottomrule
\end{tabular}
\footnotesize{\\\\Note: M: Morning, 9 AM - 12 PM; A: Afternoon: 3 - 6 PM; N: Night, 9 PM - 12 AM; Hist: historical baseline from the trip record data.}
\end{table}

Moreover, our rerouting approach's effectiveness is reinforced by its ability to cover as much as 92\% of road segments and achieve the InfCR of 99\%. This highlights the strategic value of our approach in guiding RV trajectories, enabling focused exploration of less-traveled links and implicit inference. Importantly, these merits are achieved while optimizing RV deployment within current capacity constraints, ensuring a balanced interplay between improved sensing performance and the preservation of service quality.

\subsubsection{Sensitivity analyses on incentive}\label{sec:sen_on_budget}
We now delve into the sensitivity analyses of our incentive mechanism, represented by the scaling parameter $\eta$. Fig.~\ref{fig:trade_off_budget} illustrates the interplay between the enhancement in ECR, denoted as $\Delta r^{\text{exp}}$, and the corresponding increase in the allocated incentive $B^{t}$. Generally, when employing a broader search range $K$ or a less restrictive time threshold $\delta$, we observe a consistent ECR improvement of no less than 1.5\%. Meanwhile, this improvement comes at the cost of an additional incentive cost of only up to \$0.13 ($\eta=0.2$). Furthermore, the impact of enlarging $K$ on ECR enhancement becomes marginal, as depicted in Fig.~\ref{fig:trade_off_K}. This suggests that searching up to $K=60$ shortest paths adequately covers the optimal routes, while further expansion encounters travel time constraints. Nevertheless, the incentive cost augmentation demonstrates a linear relationship with $K$ under constant $\delta=1.0$, especially for $\eta=0.2$ and $0.3$, across all time periods (excluding $\eta=0.4$ during 9 PM - 12 AM). This signifies the identification of extended detour routes ($\Delta D_{k}$), necessitating exponentially increasing incentive costs to achieve heightened sensing coverage.
\begin{figure}[H]
    \subfloat[Sensitivity analyses on $K$ under the fixed $\delta=1.0$]{\includegraphics[width=1.0\linewidth]{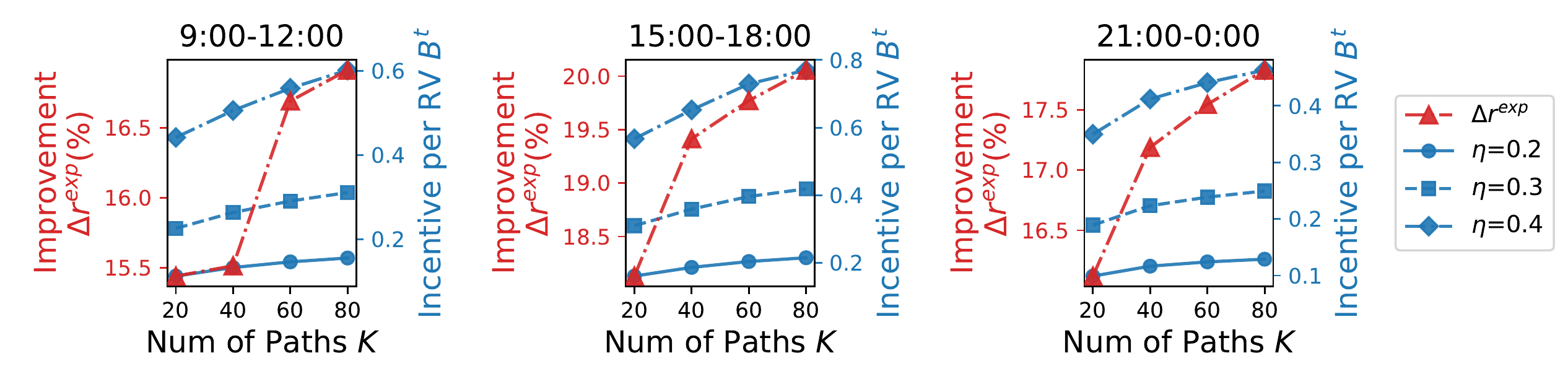}\label{fig:trade_off_K}}\\
    \subfloat[Sensitivity analyses on $\delta$ under the fixed $K=60$]{\includegraphics[width=1.0\linewidth]{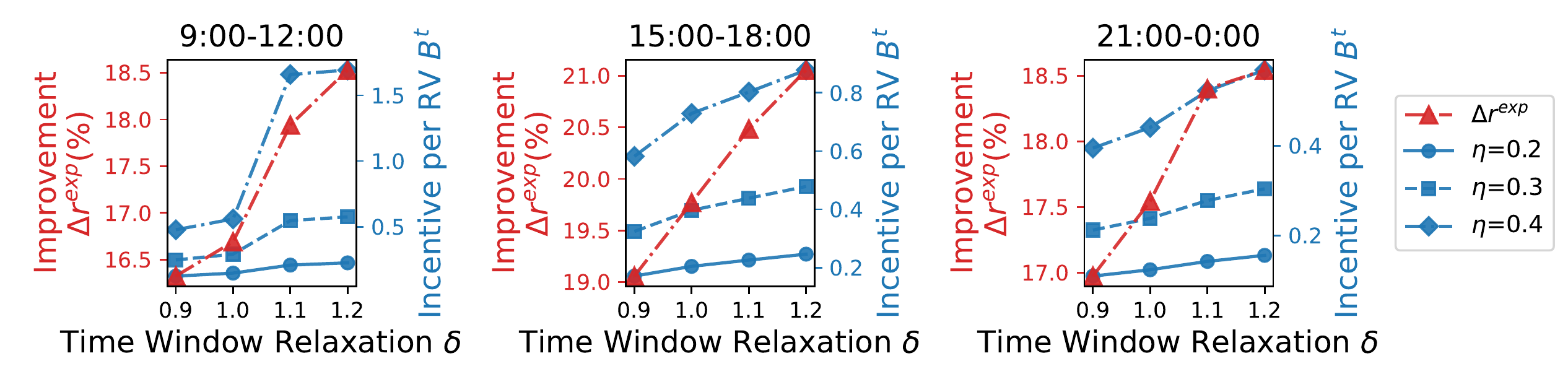}\label{fig:trade_off_delta}}
    \caption{Trade off between improvement of sensing coverage $\Delta r^{\text{exp}}$ (\%) and incentive for each RV trip $B^{t}$ (unit: US dollar)}
    \label{fig:trade_off_budget}
\end{figure}
In Fig.~\ref{fig:trade_off_delta}, the most substantial improvement in ECR (up to 4.5\%) in the afternoon requires a modest additional incentive cost as low as \$0.13 at $\eta=0.2$. Similarly, the morning and night periods exhibit ECR enhancements of 3.6\% and 2.3\%, respectively, with at least \$0.14 and \$0.08 incentive cost increments under $\eta=0.2$. These findings underscore the remarkable potential of our rerouting strategy. Particularly during mornings, given the robust sensing coverage of main roads (as evident in Fig.~\ref{fig:before_after_optimal_a}), a slight deviation from the original route suffices to traverse necessary road segments (e.g., nearby branch roads). Remarkably, in the night period, our optimal rerouting approach effectively covers over 88\% of road segments under $\delta=1.0$ and $K=20$, with only an incentive cost as low as \$ 0.1 per RV driver. This substantiates the practical viability of our rerouting solution for effective sensing tasks.

\section{Conclusion}\label{sec:conclusions}
This study introduces a novel approach for deploying large-scale RV fleets to facilitate drive-by sensing within urban road networks. Initial exploration focuses on historical sensing performance, examining coverage and reliability. Subsequently, an optimal rerouting strategy is proposed to guide unoccupied RVs along specific unvisited links, enhancing both coverage and reliability. The strategy is formulated as a trip-based RVRP, with the objective of maximizing sensing entropy to enhance the sensing coverage and sensing power, thereby reinforcing the overall sensing performance. Further, an SHCG approach is designed to generate high-quality rerouting paths for unoccupied RVs in small time intervals.

Results demonstrate the superiority of our optimal rerouting strategy over the historical baseline, enhancing ECR by a minimum of 15.1\% and sensing power by up to 27.9\%. Sensitivity analyses unveil the positive impact of relaxed time constraints and broader shortest path search ranges, boosting coverage by up to 3.6\%. Importantly, even in the most stringent scenarios in the SHCG algorithm, significant enhancements relative to the historical baseline are attained, thereby underscoring the effectiveness of our rerouting model. In addition, only a \$0.13 incentive cost per RV driver yields a 4.5\% increase in ECR and 3.8\% enhancement in sensing power, highlighting a cost-effective route to heightened sensing performance.

Finally, we note that the fleet deployment for sensing may individually sacrifice the driver income but will achieve an overall high performance on the link coverage. The results imply a promising direction for the coordinated operation of connected and autonomous RVs considering network sensing as a subtask of daily operations.

\bibliographystyle{IEEEtranN}
\bibliography{ref}

\begin{thebibliography}{55}
\providecommand{\natexlab}[1]{#1}
\providecommand{\url}[1]{#1}
\csname url@samestyle\endcsname
\providecommand{\newblock}{\relax}
\providecommand{\bibinfo}[2]{#2}
\providecommand{\BIBentrySTDinterwordspacing}{\spaceskip=0pt\relax}
\providecommand{\BIBentryALTinterwordstretchfactor}{4}
\providecommand{\BIBentryALTinterwordspacing}{\spaceskip=\fontdimen2\font plus
\BIBentryALTinterwordstretchfactor\fontdimen3\font minus
  \fontdimen4\font\relax}
\providecommand{\BIBforeignlanguage}[2]{{%
\expandafter\ifx\csname l@#1\endcsname\relax
\typeout{** WARNING: IEEEtranN.bst: No hyphenation pattern has been}%
\typeout{** loaded for the language `#1'. Using the pattern for}%
\typeout{** the default language instead.}%
\else
\language=\csname l@#1\endcsname
\fi
#2}}
\providecommand{\BIBdecl}{\relax}
\BIBdecl

\bibitem[Yoon et~al.(2007)Yoon, Noble, and Liu]{yoon2007surface}
J.~Yoon, B.~Noble, and M.~Liu, ``Surface street traffic estimation,'' in
  \emph{Proceedings of the 5th international conference on Mobile systems,
  applications and services}, 2007, pp. 220--232.

\bibitem[Work et~al.(2008)Work, Tossavainen, Blandin, Bayen, Iwuchukwu, and
  Tracton]{work2008ensemble}
D.~B. Work, O.-P. Tossavainen, S.~Blandin, A.~M. Bayen, T.~Iwuchukwu, and
  K.~Tracton, ``An ensemble kalman filtering approach to highway traffic
  estimation using gps enabled mobile devices,'' in \emph{2008 47th IEEE
  Conference on Decision and Control}.\hskip 1em plus 0.5em minus 0.4em\relax
  IEEE, 2008, pp. 5062--5068.

\bibitem[Mora et~al.(2019)Mora, Anjomshoaa, Benson, Duarte, and
  Ratti]{mora2019towards}
S.~Mora, A.~Anjomshoaa, T.~Benson, F.~Duarte, and C.~Ratti, ``Towards
  large-scale drive-by sensing with multi-purpose city scanner nodes,'' in
  \emph{2019 IEEE 5th World Forum on Internet of Things (WF-IoT)}.\hskip 1em
  plus 0.5em minus 0.4em\relax IEEE, 2019, pp. 743--748.

\bibitem[O’Keeffe et~al.(2019)O’Keeffe, Anjomshoaa, Strogatz, Santi, and
  Ratti]{o2019quantifying}
K.~P. O’Keeffe, A.~Anjomshoaa, S.~H. Strogatz, P.~Santi, and C.~Ratti,
  ``Quantifying the sensing power of vehicle fleets,'' \emph{Proceedings of the
  National Academy of Sciences}, vol. 116, no.~26, pp. 12\,752--12\,757, 2019.

\bibitem[Cui et~al.(2019)Cui, Henrickson, Ke, and Wang]{cui2019traffic}
Z.~Cui, K.~Henrickson, R.~Ke, and Y.~Wang, ``Traffic graph convolutional
  recurrent neural network: A deep learning framework for network-scale traffic
  learning and forecasting,'' \emph{IEEE Transactions on Intelligent
  Transportation Systems}, vol.~21, no.~11, pp. 4883--4894, 2019.

\bibitem[Qian et~al.(2020{\natexlab{a}})Qian, Ukkusuri, Yang, and
  Yan]{qian2020short}
X.~Qian, S.~V. Ukkusuri, C.~Yang, and F.~Yan, ``Short-term demand forecasting
  for on-demand mobility service,'' \emph{IEEE Transactions on Intelligent
  Transportation Systems}, 2020.

\bibitem[Desouza et~al.(2020)Desouza, Anjomshoaa, Duarte, Kahn, Kumar, and
  Ratti]{desouza2020air}
P.~Desouza, A.~Anjomshoaa, F.~Duarte, R.~Kahn, P.~Kumar, and C.~Ratti, ``Air
  quality monitoring using mobile low-cost sensors mounted on trash-trucks:
  Methods development and lessons learned,'' \emph{Sustainable Cities and
  Society}, vol.~60, p. 102239, 2020.

\bibitem[Khan et~al.(2016)Khan, Atamturktur, Chowdhury, and
  Rahman]{khan2016integration}
S.~M. Khan, S.~Atamturktur, M.~Chowdhury, and M.~Rahman, ``Integration of
  structural health monitoring and intelligent transportation systems for
  bridge condition assessment: current status and future direction,''
  \emph{IEEE Transactions on Intelligent Transportation Systems}, vol.~17,
  no.~8, pp. 2107--2122, 2016.

\bibitem[Abdelhamid et~al.(2015)Abdelhamid, Hassanein, and
  Takahara]{abdelhamid2015vehicle}
S.~Abdelhamid, H.~S. Hassanein, and G.~Takahara, ``Vehicle as a resource
  (vaar),'' \emph{IEEE Network}, vol.~29, no.~1, pp. 12--17, 2015.

\bibitem[Anjomshoaa et~al.(2018)Anjomshoaa, Duarte, Rennings, Matarazzo,
  deSouza, and Ratti]{anjomshoaa2018city}
A.~Anjomshoaa, F.~Duarte, D.~Rennings, T.~J. Matarazzo, P.~deSouza, and
  C.~Ratti, ``City scanner: Building and scheduling a mobile sensing platform
  for smart city services,'' \emph{IEEE Internet of things Journal}, vol.~5,
  no.~6, pp. 4567--4579, 2018.

\bibitem[Ji et~al.(2023)Ji, Han, and Liu]{ji2023survey}
W.~Ji, K.~Han, and T.~Liu, ``A survey of urban drive-by sensing: An
  optimization perspective,'' \emph{arXiv preprint arXiv:2302.00622}, 2023.

\bibitem[Google(2023)]{google2023street}
\BIBentryALTinterwordspacing
Google. (2023) Google street view. Accessed in August 2023. [Online].
  Available: \url{https://www.google.com/streetview/}
\BIBentrySTDinterwordspacing

\bibitem[Zhan et~al.(2020)Zhan, Li, and Ukkusuri]{zhan2020link}
X.~Zhan, R.~Li, and S.~V. Ukkusuri, ``Link-based traffic state estimation and
  prediction for arterial networks using license-plate recognition data,''
  \emph{Transportation Research Part C: Emerging Technologies}, vol. 117, p.
  102660, 2020.

\bibitem[Qian et~al.(2015)Qian, Zhan, and Ukkusuri]{qian2015characterizing}
X.~Qian, X.~Zhan, and S.~V. Ukkusuri, ``Characterizing urban dynamics using
  large scale taxicab data,'' in \emph{Engineering and Applied Sciences
  Optimization}.\hskip 1em plus 0.5em minus 0.4em\relax Springer, 2015, pp.
  17--32.

\bibitem[Ng(2012)]{ng2012synergistic}
M.~Ng, ``Synergistic sensor location for link flow inference without path
  enumeration: A node-based approach,'' \emph{Transportation Research Part B:
  Methodological}, vol.~46, no.~6, pp. 781--788, 2012.

\bibitem[{He, Sheng-Xue}(2013)]{he2013graphical}
{He, Sheng-Xue}, ``A graphical approach to identify sensor locations for link
  flow inference,'' \emph{Transportation Research Part B: Methodological},
  vol.~51, pp. 65--76, 2013.

\bibitem[{NYCTLC}(2021)]{nyc_yearbook2020}
{NYCTLC}, ``Tlc factbook,''
  \url{https://www1.nyc.gov/assets/tlc/downloads/pdf/2020-tlc-factbook.pdf},
  New York City Taxi \& Limousine Commission, 2021, accessed on May 2022.

\bibitem[Guo et~al.(2022)Guo, Qian, Dasgupta, Rahman, and
  Jones]{guo2022sensing}
S.~Guo, X.~Qian, S.~Dasgupta, M.~Rahman, and S.~Jones, ``Sensing and monitoring
  of urban roadway traffic state with large-scale ride-sourcing vehicles,''
  \emph{The Rise of Smart Cities}, pp. 551--582, 2022.

\bibitem[Gentili and Mirchandani(2018)]{gentili2018review}
M.~Gentili and P.~B. Mirchandani, ``Review of optimal sensor location models
  for travel time estimation,'' \emph{Transportation Research Part C: Emerging
  Technologies}, vol.~90, pp. 74--96, 2018.

\bibitem[Lam and Lo(1990)]{lam1990accuracy}
W.~Lam and H.~Lo, ``Accuracy of od estimates from traffic counts,''
  \emph{Traffic engineering \& control}, vol.~31, no.~6, 1990.

\bibitem[Yang et~al.(1991)Yang, Iida, and Sasaki]{yang1991analysis}
H.~Yang, Y.~Iida, and T.~Sasaki, ``An analysis of the reliability of an
  origin-destination trip matrix estimated from traffic counts,''
  \emph{Transportation Research Part B: Methodological}, vol.~25, no.~5, pp.
  351--363, 1991.

\bibitem[Yang and Zhou(1998)]{yang1998optimal}
H.~Yang and J.~Zhou, ``Optimal traffic counting locations for
  origin--destination matrix estimation,'' \emph{Transportation Research Part
  B: Methodological}, vol.~32, no.~2, pp. 109--126, 1998.

\bibitem[Ban et~al.(2009)Ban, Herring, Margulici, and Bayen]{ban2009optimal}
X.~J. Ban, R.~Herring, J.~Margulici, and A.~M. Bayen, ``Optimal sensor
  placement for freeway travel time estimation,'' in \emph{Transportation and
  traffic theory 2009: Golden jubilee}.\hskip 1em plus 0.5em minus 0.4em\relax
  Springer, 2009, pp. 697--721.

\bibitem[Zhou and List(2010)]{zhou2010information}
X.~Zhou and G.~F. List, ``An information-theoretic sensor location model for
  traffic origin-destination demand estimation applications,''
  \emph{Transportation Science}, vol.~44, no.~2, pp. 254--273, 2010.

\bibitem[Li and Ouyang(2011)]{li2011reliable}
X.~Li and Y.~Ouyang, ``Reliable sensor deployment for network traffic
  surveillance,'' \emph{Transportation research part B: methodological},
  vol.~45, no.~1, pp. 218--231, 2011.

\bibitem[Fujito et~al.(2006)Fujito, Margiotta, Huang, and
  Perez]{fujito2006effect}
I.~Fujito, R.~Margiotta, W.~Huang, and W.~A. Perez, ``Effect of sensor spacing
  on performance measure calculations,'' \emph{Transportation research record},
  vol. 1945, no.~1, pp. 1--11, 2006.

\bibitem[Ban et~al.(2007)Ban, Li, Skabardonis, and
  Margulici]{ban2007performance}
X.~Ban, Y.~Li, A.~Skabardonis, and J.-D. Margulici, ``Performance evaluation of
  travel time methods for real time traffic applications,'' in \emph{11th World
  Conference on Transport Research}, 2007.

\bibitem[Fei et~al.(2007)Fei, Mahmassani, and Eisenman]{fei2007sensor}
X.~Fei, H.~S. Mahmassani, and S.~M. Eisenman, ``Sensor coverage and location
  for real-time traffic prediction in large-scale networks,''
  \emph{Transportation Research Record}, vol. 2039, no.~1, pp. 1--15, 2007.

\bibitem[Fei et~al.(2013)Fei, Mahmassani, and Murray-Tuite]{fei2013vehicular}
X.~Fei, H.~S. Mahmassani, and P.~Murray-Tuite, ``Vehicular network sensor
  placement optimization under uncertainty,'' \emph{Transportation Research
  Part C: Emerging Technologies}, vol.~29, pp. 14--31, 2013.

\bibitem[Hu et~al.(2009)Hu, Peeta, and Chu]{hu2009identification}
S.-R. Hu, S.~Peeta, and C.-H. Chu, ``Identification of vehicle sensor locations
  for link-based network traffic applications,'' \emph{Transportation Research
  Part B: Methodological}, vol.~43, no. 8-9, pp. 873--894, 2009.

\bibitem[Viti et~al.(2014)Viti, Rinaldi, Corman, and
  Tamp{\`e}re]{viti2014assessing}
F.~Viti, M.~Rinaldi, F.~Corman, and C.~M. Tamp{\`e}re, ``Assessing partial
  observability in network sensor location problems,'' \emph{Transportation
  research part B: methodological}, vol.~70, pp. 65--89, 2014.

\bibitem[Xu et~al.(2016)Xu, Lo, Chen, and Castillo]{xu2016robust}
X.~Xu, H.~K. Lo, A.~Chen, and E.~Castillo, ``Robust network sensor location for
  complete link flow observability under uncertainty,'' \emph{Transportation
  Research Part B: Methodological}, vol.~88, pp. 1--20, 2016.

\bibitem[Herrera et~al.(2010)Herrera, Work, Herring, Ban, Jacobson, and
  Bayen]{herrera2010evaluation}
J.~C. Herrera, D.~B. Work, R.~Herring, X.~J. Ban, Q.~Jacobson, and A.~M. Bayen,
  ``Evaluation of traffic data obtained via gps-enabled mobile phones: The
  mobile century field experiment,'' \emph{Transportation Research Part C:
  Emerging Technologies}, vol.~18, no.~4, pp. 568--583, 2010.

\bibitem[Sun and Ban(2013)]{sun2013vehicle}
Z.~Sun and X.~J. Ban, ``Vehicle trajectory reconstruction for signalized
  intersections using mobile traffic sensors,'' \emph{Transportation Research
  Part C: Emerging Technologies}, vol.~36, pp. 268--283, 2013.

\bibitem[Park and Haghani(2015)]{park2015optimal}
H.~Park and A.~Haghani, ``Optimal number and location of bluetooth sensors
  considering stochastic travel time prediction,'' \emph{Transportation
  Research Part C: Emerging Technologies}, vol.~55, pp. 203--216, 2015.

\bibitem[Anjomshoaa et~al.(2020)Anjomshoaa, Santi, Duarte, and
  Ratti]{anjomshoaa2020quantifying}
A.~Anjomshoaa, P.~Santi, F.~Duarte, and C.~Ratti, ``Quantifying the
  spatio-temporal potential of drive-by sensing in smart cities,''
  \emph{Journal of Urban Technology}, pp. 1--18, 2020.

\bibitem[Ma and Qian(2021)]{ma2021high}
W.~Ma and S.~Qian, ``High-resolution traffic sensing with probe autonomous
  vehicles: A data-driven approach,'' \emph{Sensors}, vol.~21, no.~2, p. 464,
  2021.

\bibitem[Dai and Han(2023)]{dai2023exploring}
Z.~Dai and K.~Han, ``Exploring the drive-by sensing power of bus fleet through
  active scheduling,'' \emph{Transportation Research Part E: Logistics and
  Transportation Review}, vol. 171, p. 103029, 2023.

\bibitem[Tian et~al.(2017)Tian, Liu, Sun, Zhang, Cao, and
  Gui]{tian2017movement}
F.~Tian, B.~Liu, X.~Sun, X.~Zhang, G.~Cao, and L.~Gui, ``Movement-based
  incentive for crowdsourcing,'' \emph{IEEE Transactions on Vehicular
  Technology}, vol.~66, no.~8, pp. 7223--7233, 2017.

\bibitem[Liu et~al.(2020)Liu, Yang, Li, Deng, Huang, and Liu]{liu2020incentive}
J.~Liu, Y.~Yang, D.~Li, X.~Deng, S.~Huang, and H.~Liu, ``An incentive mechanism
  based on behavioural economics in location-based crowdsensing considering an
  uneven distribution of participants,'' \emph{IEEE Transactions on Mobile
  Computing}, vol.~21, no.~1, pp. 44--62, 2020.

\bibitem[Zhang et~al.(2019)Zhang, Jiang, and Wang]{zhang2019incentive}
X.~Zhang, L.~Jiang, and X.~Wang, ``Incentive mechanisms for mobile crowdsensing
  with heterogeneous sensing costs,'' \emph{IEEE Transactions on Vehicular
  Technology}, vol.~68, no.~4, pp. 3992--4002, 2019.

\bibitem[Wang et~al.(2020)Wang, Tushar, Yuen, and Zhang]{wang2020promoting}
X.~Wang, W.~Tushar, C.~Yuen, and X.~Zhang, ``Promoting users’ participation
  in mobile crowdsourcing: A distributed truthful incentive mechanism (dtim)
  approach,'' \emph{IEEE Transactions on Vehicular Technology}, vol.~69, no.~5,
  pp. 5570--5582, 2020.

\bibitem[Zhang et~al.(2015)Zhang, Yang, Sun, Liu, Tang, Xing, and
  Mao]{zhang2015incentives}
X.~Zhang, Z.~Yang, W.~Sun, Y.~Liu, S.~Tang, K.~Xing, and X.~Mao, ``Incentives
  for mobile crowd sensing: A survey,'' \emph{IEEE Communications Surveys \&
  Tutorials}, vol.~18, no.~1, pp. 54--67, 2015.

\bibitem[Agarwal et~al.(2020)Agarwal, Iyengar, Swaminathan, Sharma, Raj, and
  Hatwar]{agarwal2020modulo}
D.~Agarwal, S.~Iyengar, M.~Swaminathan, E.~Sharma, A.~Raj, and A.~Hatwar,
  ``Modulo: Drive-by sensing at city-scale on the cheap,'' in \emph{Proceedings
  of the 3rd ACM SIGCAS Conference on Computing and Sustainable Societies},
  2020, pp. 187--197.

\bibitem[Li et~al.(2007)Li, Mirchandani, and Borenstein]{li2007vehicle}
J.~Li, P.~B. Mirchandani, and D.~Borenstein, ``The vehicle rescheduling
  problem: Model and algorithms,'' \emph{Networks: An International Journal},
  vol.~50, no.~3, pp. 211--229, 2007.

\bibitem[Li et~al.(2009)Li, Mirchandani, and Borenstein]{li2009real}
------, ``Real-time vehicle rerouting problems with time windows,''
  \emph{European Journal of Operational Research}, vol. 194, no.~3, pp.
  711--727, 2009.

\bibitem[Solomon(1987)]{solomon1987algorithms}
M.~M. Solomon, ``Algorithms for the vehicle routing and scheduling problems
  with time window constraints,'' \emph{Operations research}, vol.~35, no.~2,
  pp. 254--265, 1987.

\bibitem[Desrochers et~al.(1992)Desrochers, Desrosiers, and
  Solomon]{desrochers1992new}
M.~Desrochers, J.~Desrosiers, and M.~Solomon, ``A new optimization algorithm
  for the vehicle routing problem with time windows,'' \emph{Operations
  research}, vol.~40, no.~2, pp. 342--354, 1992.

\bibitem[Yen(1971)]{yen1971finding}
J.~Y. Yen, ``Finding the k shortest loopless paths in a network,''
  \emph{management Science}, vol.~17, no.~11, pp. 712--716, 1971.

\bibitem[Qian et~al.(2020{\natexlab{b}})Qian, Lei, Xue, Lei, and
  Ukkusuri]{qian2020impact}
X.~Qian, T.~Lei, J.~Xue, Z.~Lei, and S.~V. Ukkusuri, ``Impact of transportation
  network companies on urban congestion: Evidence from large-scale trajectory
  data,'' \emph{Sustainable Cities and Society}, vol.~55, p. 102053, 2020.

\bibitem[Qian et~al.(2020{\natexlab{c}})Qian, Kumar, Zhang, and
  Ukkusuri]{qian2020understanding}
X.~Qian, D.~Kumar, W.~Zhang, and S.~V. Ukkusuri, ``Understanding the
  operational dynamics of mobility service providers: A case of uber,''
  \emph{ACM Transactions on Spatial Algorithms and Systems (TSAS)}, vol.~6,
  no.~2, pp. 1--20, 2020.

\bibitem[Luxen and Vetter(2011)]{luxen-vetter-2011}
\BIBentryALTinterwordspacing
D.~Luxen and C.~Vetter, ``Real-time routing with openstreetmap data,'' in
  \emph{Proceedings of the 19th ACM SIGSPATIAL International Conference on
  Advances in Geographic Information Systems}, ser. GIS '11.\hskip 1em plus
  0.5em minus 0.4em\relax New York, NY, USA: ACM, 2011, pp. 513--516. [Online].
  Available: \url{http://doi.acm.org/10.1145/2093973.2094062}
\BIBentrySTDinterwordspacing

\bibitem[Boeing(2017)]{boeing2017osmnx}
G.~Boeing, ``Osmnx: New methods for acquiring, constructing, analyzing, and
  visualizing complex street networks,'' \emph{Computers, Environment and Urban
  Systems}, vol.~65, pp. 126--139, 2017.

\bibitem[{OpenStreetMap contributors}(2017)]{OpenStreetMap}
{OpenStreetMap contributors}. (2017) {Planet dump retrieved from
  https://planet.osm.org }. \url{ https://www.openstreetmap.org }.

\bibitem[{The New York City Taxi and Limousine
  Commission}(2018)]{nyc_driver2016}
{The New York City Taxi and Limousine Commission}, ``Driver income rules,''
  \url{https://www1.nyc.gov/assets/tlc/downloads/pdf/driver_income_rules_12_04_2018.pdf},
  2018, accessed on July 2022.

\end{thebibliography}

\begin{IEEEbiography}[
{\includegraphics[width=1in,height=1.25in,clip,keepaspectratio]{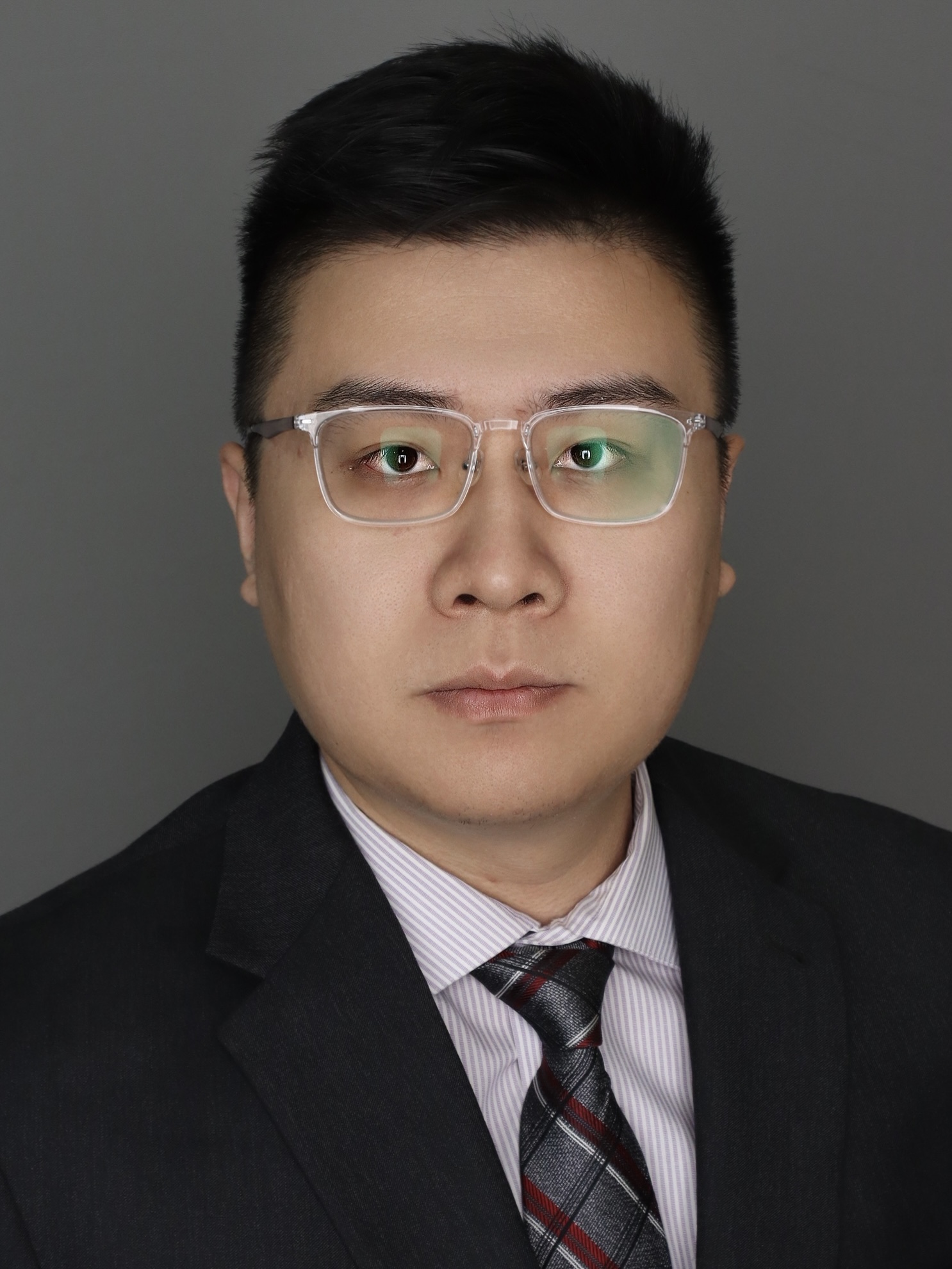}}
]
{Shuocheng Guo} received the B.E degree in civil engineering from Central South University, Changsha, China, and the M.S. degree in transportation engineering from University of Illinois at Urbana and Champaign, Urbana, IL, USA. He is currently working towards the Ph.D. degree in transportation engineering at The University of Alabama, Tuscaloosa, AL, USA. His research interests include combinatorial optimization, electrified transportation network, and urban computing.
\end{IEEEbiography}
\begin{IEEEbiography}
[
{\includegraphics[width=1in,height=1.25in,clip,keepaspectratio]{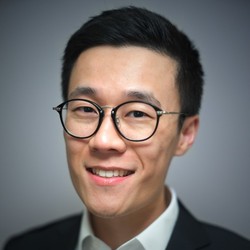}}
]
{Xinwu Qian} received the B.S. degree in transportation engineering from Tongji University, Shanghai,
China, and the M.S. and Ph.D. degrees in transportation engineering from Purdue University, West
Lafayette, IN, USA. He is currently an Assistant
Professor of Civil, Construction, and Environmental
Engineering with The University of Alabama. His
research interests include big data analytics, complex network analysis, and network modeling and optimization.
\end{IEEEbiography}

\end{document}